\title{Complete Positivity of Rieffel's Deformation Quantization by
  Actions of $\mathbb{R}^d$}
\author{\textbf{Daniel Kaschek}\thanks{Daniel.Kaschek@physik.uni-freiburg.de},
  \addtocounter{footnote}{1}
  \textbf{Nikolai Neumaier}\thanks{Nikolai.Neumaier@physik.uni-freiburg.de},
  \textbf{Stefan Waldmann}\thanks{Stefan.Waldmann@physik.uni-freiburg.de},
  \\[0.1cm]
  Fakult{\"a}t f{\"u}r Mathematik und Physik\\
  Albert-Ludwigs-Universit{\"a}t Freiburg\\
  Physikalisches Institut\\
  Hermann Herder Stra{\ss}e 3\\
  D 79104 Freiburg\\
  Germany
  }
\date{September 2008}
\renewcommand{\mathbb}[1]{\mathbbm{#1}} 
\numberwithin{equation}{section}
\newtheorem{lemma}{Lemma}[section]
\newtheorem{proposition}[lemma]{Proposition}
\newtheorem{theorem}[lemma]{Theorem}
\newtheorem{corollary}[lemma]{Corollary}
\newtheorem{definition}[lemma]{Definition}
\newtheorem{remark}[lemma]{Remark}
\newenvironment{proof}[1][{}]{\par\noindent\textsc{Proof{#1}: }}{\hspace*{\fill}$\blacksquare$\smallskip\noindent\par}
\newcommand{\cc}[1]      {\overline{{#1}}}              
\newcommand{\id}         {\operatorname{\mathsf{id}}}   
\newcommand{\SP}[1]      {\left\langle{#1}\right\rangle} 
\newcommand{\Unit}       {\mathbb{1}}                  
\newcommand{\norm}[1]    {\left\|{#1}\right\|}            
\newcommand{\I}          {\mathrm{i}}
\newcommand{\E}          {\mathrm{e}}
\newcommand{\D}          {\operatorname{\mathrm{d}}}
\begin{document}

\maketitle

\begin{abstract}
    In this paper we consider $C^*$-algebraic deformations by actions
    of $\mathbb{R}^d$ \`{a} la Rieffel and show that every state of
    the undeformed algebra can be deformed into a state of the
    deformed algebra in the sense of a continuous field of states. The
    construction is explicit and involves a convolution operator with
    a particular Gauß function.
\end{abstract}

\noindent
\textbf{Keywords:} Rieffel deformation quantization. Completely
positive deformation. Continuous fields.

\noindent
\textbf{MSC (2008):} 53D55, 46L87, 81R60, 46L65.

\newpage

\tableofcontents

%
%

\section{Introduction}
\label{sec:Intro}

In deformation quantization \cite{bayen.et.al:1978a} the transition
from classical mechanics to quantum mechanics is obtained as an
associative deformation of the classical observable algebra, modelled
by a certain class of functions on the classical phase space. In
\emph{formal} deformation quantization this is accomplished by
constructing a new associative product, the star product, as a formal
power series with formal parameter $\hbar$. While this theory is by now very
well understood, see \cite{kontsevich:2003a, dewilde.lecomte:1983b,
  fedosov:1996a, nest.tsygan:1995a, deligne:1995a,
  bertelson.cahen.gutt:1997a} for existence and classification
results and \cite{waldmann:2007a} for a gentle introduction, from a
physicist's perspective the formal character of the star products is still
not satisfying: $\hbar$ is not a formal parameter after all, whence at
the end of the day, some sort of ``convergence'' in $\hbar$ is
needed.

Attacking the convergence problem of the formal series seems to be
complicated though in examples this can be done
\cite{beiser.roemer.waldmann:2007a}. More successful are approaches
which are intrinsically non-formal like the Berezin-Toeplitz inspired
quantizations \cite{cahen.gutt.rawnsley:1995a,
  cahen.gutt.rawnsley:1994a, cahen.gutt.rawnsley:1993a,
  cahen.gutt.rawnsley:1990a,
  bordemann.meinrenken.schlichenmaier:1991a} or Rieffel's approach
using oscillatory integrals based on group actions of $\mathbb{R}^d$.
In this version \cite{rieffel:1993a}, the starting point is a
$C^*$-algebra $\mathfrak{A}$ endowed with an isometric, strongly
continuous action by $^*$-automorphisms by some finite-dimensional
vector space $V$. Out of this and the choice of a symplectic form on
$V$, Rieffel constructs a deformation of $\mathfrak{A}$ in the sense
of a continuous field of $C^*$-algebras, the field parameter being
$\hbar$. While the construction is very general, there are yet many
examples of Poisson manifolds which can be deformation quantized this
way. In this framework of strict deformations many results have been
obtained, most notably \cite{natsume.nest.peter:2003a,
  landsman:1998a}.

While the above constructions deal with the observable algebra, for a
physically complete description of quantization also the states have
to be taken into account. In both approaches, the appropriate notion
of states is that of positive linear functionals on the observable
algebras. While for $C^*$-algebras this is of course a well-known
concept, also in the formal deformation quantization this leads to a
physically reasonable definition incorporating a reasonable
representation theory, see e.g.\ \cite{bordemann.waldmann:1998a,
  bursztyn.waldmann:2005b, waldmann:2005b} and references therein.

A fundamental question is whether a given classical state arises as
the classical limit of a quantum state. In formal deformation
quantization there is a general and affirmative answer to this
question \cite{bursztyn.waldmann:2005a, bursztyn.waldmann:2000a}. In
the strict approaches, Landsman discussed this in
\cite{landsman:1993b} for a certain class of examples: the appropriate
notion of classical limit and deformation of states is that of a
continuous field of states with respect to a given continuous field of
$C^*$-algebras. His construction is based on particular
$^*$-representations and certain coherent states and their Wigner
functions. More recently, Landsman uses continuous fields of states in
his discussion of the Born rule \cite{landsman:2008a}.

In this paper, we consider Rieffel's deformation by actions of
$\mathbb{R}^d$ in general and prove that every state of the undeformed
algebra can be deformed into a continuous field of states for the
field of deformed algebras.  Moreover, we give an explicit
construction including a detailed study of the asymptotics of the
deformed states for $\hbar \longrightarrow 0$, see also
\cite{kaschek:2008a}. It turns out that the asymptotic expansion
coincides in a very precise sense with the formal deformations
obtained in \cite{bursztyn.waldmann:2000a}.

The paper is organized as follows: in Section~\ref{sec:SonAinfty} we
recall Rieffel's deformation in the Fr\'echet algebraic framework and
define an operator $S_\hbar$ being the ``convolution'' with a Gauß
function. The precise form of $S_\hbar$ resembles the Wigner
functions Landsman used, however now $S_\hbar$ is defined directly on
the algebra.  The asymptotics of $S_\hbar$ for $\hbar \longrightarrow
0^+$ is studied in detail.  In Section~\ref{sec:DeformationPosFun} we
show that $S_\hbar$ maps squares $a^* \star_\hbar a$ of the deformed
algebra to positive elements of the undeformed algebra. This allows to
define a positive functional $\omega_\hbar = \omega \circ S_\hbar$ of
the deformed algebra for every positive functional $\omega$ of the
undeformed algebra. A detailed asymptotic expansion is obtained as well.
Section~\ref{sec:CstarCase} is devoted to the more particular case of
a $C^*$-algebra deformation. Here we show that the operator $S_\hbar$
is also continuous in the $C^*$-topology of the deformed algebra
whence it extends to the $C^*$-algebraic completion.  Finally, in
Section~\ref{sec:FieldOfStates} we show that the positive functionals
$\{\omega_\hbar\}_{\hbar \ge 0}$ indeed form a continuous field of
states.

\noindent
\textbf{Acknowledgements:} We would like thank Klaas Landsman for
pointing out reference \cite{landsman:2008a} as well as Marc Rieffel
and the referee for valuable comments.

%
%

\section{The Operator $S$ on $\mathcal A^{\infty}$}
\label{sec:SonAinfty}

In this section $\mathcal{A}$ denotes a Fr\'echet $^*$-algebra endowed
with a strongly continuous action $\alpha$ by $^*$-homomorphisms of a
finite-dimensional vector space $V$ which we assume without
restriction to be even dimensional. Moreover, one requires that there
is a system of seminorms $\norm{\cdot}_k$ defining the topology of
$\mathcal{A}$ such that with respect to these seminorms the action is
isometric.  By $\mathcal{A}^\infty \subseteq \mathcal{A}$ we denote
the subspace of smooth vectors in $\mathcal{A}$ with respect to
$\alpha$.  It is well known that $\mathcal{A}^\infty$ is a dense
$^*$-subalgebra of $\mathcal{A}$. Moreover, $\mathcal{A}^\infty$
carries a finer topology making it into a Fr\'echet algebra, too.  A
system of seminorms defining the topology is explicitly given by
\begin{equation}
    \label{eq:SeminormsSmoothVectors}
    \norm{a}_{k,\mu}
    = \sup_{|\beta|\leq \mu}\norm{\partial^{\beta} a}_{k},
\end{equation}
where using multi-index notation $\partial^\beta a$ denotes the derivative of
$\alpha_u(a)$ with respect to $u$ at $u = 0$, see e.g.\ \cite{taylor:1986a} for more
background on smooth vectors.

In a next step one chooses a non-degenerate bilinear anti-symmetric
form $\theta$ on $V$ and $\hbar > 0$. Then Rieffel showed in
\cite{rieffel:1993a} that
\begin{equation}
    \label{eq:TheStarProduct}
    a \star_{\hbar} b 
    =
    \frac{1}{(\pi\hbar)^{2n}}
    \int_{V \times V}
    \alpha_u(a) \alpha_v(b) \E^{\frac{2\I}{\hbar}\theta(u,v)} \D(u,v),
\end{equation}
which is defined for $a, b \in \mathcal{A}^\infty$, yields a
well-defined associative product such that $\star_\hbar$ is still
continuous with respect to the $\mathcal{A}^\infty$-topology.
Moreover, the original $^*$-involution of $\mathcal{A}^\infty$ is
still a $^*$-involution with respect to $\star_\hbar$. The precise
definition of the integral in an oscillatory sense is sophisticated
and can be found in Rieffel's booklet \cite{rieffel:1993a}. Note that
we have to choose a normalization for the Haar measure on $V$ in
\eqref{eq:TheStarProduct}. We shall also make use of linear
coordinates denoted by $v = u_i e_i$ in the sequel.

\begin{definition}[The Operator $S_g$]
    \label{definition:S}
    Let $g: V\times V \longrightarrow \mathbb{R}$ be a positive
    definite inner product on $V$. Then the linear operator $S_g:
    \mathcal{A} \longrightarrow \mathcal{A}$ is defined by
    \begin{equation}
        \label{eq:SDef}
        S_g(a) = \int_V \E^{-g(u,u)}\alpha_u(a) \D u.
    \end{equation}
\end{definition}
Thanks to the fast decay of the Gauß function and the fact that the
action $\alpha$ is isometric, the definition of $S_g$ in
\eqref{eq:SDef} as an improper Riemann integral is possible.  More
general, we need the following construction: let $B(V, \mathcal{A})$
denote the $\mathcal{A}$-valued functions on $V$ such that $\sup_{v
  \in V}\norm{f(v)}_k < \infty$ for all $k$, i.e.\ the bounded
functions with respect to the seminorms $\norm{\cdot}_k$ of
$\mathcal{A}$. Then we define
\begin{equation}
    \label{eq:TildeSgDef}
    \tilde{S}_g f = \int_V \E^{-g(u,u)} f(u) \D u
\end{equation}
for $f \in B(V, \mathcal{A})$. Again, a naive definition of the
integral is possible. Finally, let $C^0_u(V, \mathcal{A})$ be the
uniformly continuous functions in $B(V, \mathcal{A})$ and let
$C^\infty_u(V, \mathcal{A})$ be the smooth functions with all partial
derivatives in $C^0_u(V, \mathcal{A})$. Clearly, the spaces $C^0_u (V,
\mathcal{A})$ as well as $C^\infty_u(V, \mathcal{A})$ are equipped
with a natural Fr\'echet topology by taking the sup-norm over $V$ of
seminorms of the values of the (derivatives of the) functions.  Then
the following Proposition lists some properties of $S_g$ and
$\tilde{S}_g$:
\begin{proposition}[Continuity of $S_g$]
    \label{proposition:ContinuityOfS}
    ~
    \begin{enumerate}
    \item The operator $S_g: \mathcal{A} \longrightarrow \mathcal{A}$
        is continuous.
    \item We have $S_g(\mathcal{A}^\infty) \subseteq
        \mathcal{A}^\infty$ and $S_g: \mathcal{A}^\infty
        \longrightarrow \mathcal{A}^\infty$ is continuous, too.
    \item The restriction of $\tilde{S}_g$ to $C^0_u (V,
        \mathcal{A})$ and $C^\infty_u(V, \mathcal{A})$
        is continuous in the respective topologies.
    \item The restriction of $\tilde{S}_g$ to $C^0_u(V,
        \mathcal{A}^\infty)$ and $C^\infty_u(V, \mathcal{A}^\infty)$
        takes values in $\mathcal{A}^\infty$ and is again continuous.
    \end{enumerate}
\end{proposition}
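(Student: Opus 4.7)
The unifying idea is that $\alpha$ acts isometrically. For (i), the map $u\mapsto\E^{-g(u,u)}\alpha_u(a)$ is continuous by strong continuity of $\alpha$ and is dominated in every seminorm by $\E^{-g(u,u)}\norm{a}_k$. Setting $C_g := \int_V \E^{-g(u,u)}\D u < \infty$, the improper Riemann integral converges in the Fr\'echet space $\mathcal{A}$ and yields $\norm{S_g(a)}_k \le C_g\norm{a}_k$. Essentially the same estimate proves (iii): for $f\in C^0_u(V,\mathcal{A})$ one gets $\norm{\tilde{S}_g f}_k \le C_g\sup_{u\in V}\norm{f(u)}_k$; on $C^\infty_u(V,\mathcal{A})$ continuity then follows a fortiori because its topology is finer.

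For (ii), I would first exploit that $V$ is abelian: $\alpha_w\alpha_u=\alpha_u\alpha_w$, so pulling the continuous map $\alpha_w$ through the defining integral of $S_g$ gives $\alpha_w\circ S_g = S_g\circ\alpha_w$. Given $a\in\mathcal{A}^\infty$, the orbit map $w\mapsto\alpha_w(a)$ is smooth into $\mathcal{A}$ with $\partial^\beta\alpha_w(a)=\alpha_w(\partial^\beta a)$. Composing with the bounded linear operator $S_g$ from (i) produces a smooth $\mathcal{A}$-valued curve $w\mapsto S_g(\alpha_w(a))=\alpha_w(S_g(a))$ with derivatives $\alpha_w(S_g(\partial^\beta a))$; evaluating at $w=0$ shows $S_g(a)\in\mathcal{A}^\infty$ and $\partial^\beta S_g(a)=S_g(\partial^\beta a)$. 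Taking the supremum over $|\beta|\le\mu$ of the estimate from (i) yields $\norm{S_g(a)}_{k,\mu}\le C_g\norm{a}_{k,\mu}$.

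For (iv), the delicate point is that pointwise smoothness of $u\mapsto f(u)$ as an $\mathcal{A}^\infty$-valued function must be converted into smoothness of the auxiliary map $w\mapsto(\alpha_w\circ f)$ \emph{in the sup-norm topology} of $C^0_u(V,\mathcal{A})$; this uniformity is the main obstacle. Once it is established, everything else is formal: from $\alpha_w\circ\tilde{S}_g f = \tilde{S}_g(\alpha_w\circ f)$ and the bounded linear map $\tilde{S}_g$ of (iii) one gets smoothness of $w\mapsto\alpha_w(\tilde{S}_g f)$, hence $\tilde{S}_g f\in\mathcal{A}^\infty$ with $\partial^\beta\tilde{S}_g f = \tilde{S}_g(\partial^\beta f)$, and the estimate $\norm{\tilde{S}_g f}_{k,\mu}\le C_g\sup_u\norm{f(u)}_{k,\mu}$ follows as in (ii). To supply the uniformity, I would use Taylor with integral remainder,
\[
\alpha_w(f(u))-f(u)-w_i\,\partial^i f(u) = w_iw_j\int_0^1(1-t)\,\alpha_{tw}\bigl(\partial^i\partial^j f(u)\bigr)\D t,
\]
together with the isometry of $\alpha_{tw}$ and the bound $\sup_u\norm{\partial^i\partial^j f(u)}_k<\infty$, which holds precisely because $f$ is uniformly continuous as an $\mathcal{A}^\infty$-valued function. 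This makes the remainder of order $|w|^2$ uniformly in $u$, so $w\mapsto\alpha_w\circ f$ is $C^1$ into $C^0_u(V,\mathcal{A})$, and iterating the argument gives $C^\infty$. The $C^\infty_u(V,\mathcal{A}^\infty)$ case is identical because $\tilde{S}_g f$ only sees the values of $f$ and the finer topology only adds seminorms on the source.
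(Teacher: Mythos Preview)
Your argument is correct and considerably more detailed than the paper's. The paper reverses your logical order: it deduces (i) and (ii) from (iii) and (iv) by specializing to $f(u)=\alpha_u(a)$ (isometry of $\alpha$ places this $f$ in $C^0_u(V,\mathcal{A})$ resp.\ $C^\infty_u(V,\mathcal{A}^\infty)$), and then dismisses (iii) and (iv) as ``a straightforward estimate''. The implicit argument there is to regard $\int_V \E^{-g(u,u)} f(u)\,\D u$ as an improper Riemann integral converging directly in the Fr\'echet space $\mathcal{A}^\infty$: since $f$ is continuous into $\mathcal{A}^\infty$ and dominated in every seminorm $\norm{\cdot}_{k,\mu}$ by the integrable function $\E^{-g(u,u)}\sup_u\norm{f(u)}_{k,\mu}$, the integral lands in $\mathcal{A}^\infty$ and the bound $\norm{\tilde S_g f}_{k,\mu}\le C_g\sup_u\norm{f(u)}_{k,\mu}$ is immediate. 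Your route instead first lands the integral in $\mathcal{A}$ and then verifies membership in $\mathcal{A}^\infty$ by checking smoothness of $w\mapsto\alpha_w(\tilde S_g f)$ via the Taylor remainder; this is more hands-on and has the merit of making the uniformity issue explicit, whereas the paper hides it inside convergence of the Fr\'echet-space integral. One small correction: in your (iv), the bound $\sup_u\norm{\partial^i\partial^j f(u)}_k<\infty$ follows from \emph{boundedness} of $f$ in the $\mathcal{A}^\infty$-seminorms (the ``$B$'' in $C^0_u\subseteq B(V,\mathcal{A}^\infty)$), not from uniform continuity as such.
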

\begin{proof}
    The first two statements can be recovered from the third and
    fourth by considering the function $f(u) = \alpha_u(a)$ for $a \in
    \mathcal{A}$ or $a \in \mathcal{A}^\infty$, respectively: as the
    action is isometric we have $f \in C^0_u(V, \mathcal{A})$ and
    $C^\infty_u(V, \mathcal{A}^\infty)$, respectively. The continuity
    statements in the third and fourth part are then a straightforward
    estimate.
\end{proof}

In a next step we want to understand the asymptotics of the operator
$S_g$. To this end we rescale the inner product by $\hbar > 0$ and
consider the normalized Gauß function
\begin{equation}
    \label{eq:NormalizedGauss}
    G_\hbar(u)
    = \frac{\sqrt{\det{G}}}{(\pi\hbar)^{n}} \E^{-\frac{g(u,u)}{\hbar}}
\end{equation}
where $\det{G} > 0$ is the determinant of $g$ with respect to the Haar
measure on $V$ and $2n = \dim V$. The normalization constant is chosen
such that the integral of $G_\hbar$ is $1$. For a fixed choice of
$g$ we consider the operator
\begin{equation}
    \label{eq:ShbarDef}
    S_\hbar (a) = \int_V G_\hbar(u) \alpha_u(a) \D u.
\end{equation}
\begin{lemma}
    \label{lemma:asymp}
    For every $a \in \mathcal A$ we have $\lim_{\hbar \searrow 0}
    S_{\hbar}(a) = a$ in the topology of $\mathcal{A}$.  Moreover, for
    every $a \in \mathcal{A}^\infty$ we have
    \begin{equation}
        \label{eq:ShbarToIdAinftyTop}
        \lim_{\hbar \searrow 0} S_{\hbar}(a) = a
    \end{equation}
    and
    \begin{equation}
        \frac{\D}{\D\hbar}
        \bigl(S_{\hbar}a\bigr) 
        = S_{\hbar}\left(\frac{1}{4}\Delta_g a\right),
        \label{eq:iteration1}
    \end{equation}
    both with respect to the topology of $\mathcal A^{\infty}$ where
    \begin{equation}
        \label{eq:DefinitionLaplacian}
        \Delta_g a = 
        \sum_{i, j} (G^{-1})^{ij}
        \left.
            \frac{\partial^2}{\partial u_i \partial u_j} \alpha_u(a)
        \right|_{u=0}
    \end{equation}
    is the Laplacian with respect to the inner product $g$ and the
    action $\alpha$ viewed as continuous operator on
    $\mathcal{A}^\infty$. The operator $\Delta_g$ does not depend on
    the choice of linear coordinates.
\end{lemma}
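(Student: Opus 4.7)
The first task is to establish pointwise convergence $S_\hbar(a)\to a$. The natural move is a rescaling: substituting $u=\sqrt{\hbar}\,v$ in \eqref{eq:ShbarDef} kills the $\hbar$-dependence of the measure and yields
\begin{equation*}
    S_\hbar(a)=\frac{\sqrt{\det G}}{\pi^n}\int_V \E^{-g(v,v)}\alpha_{\sqrt{\hbar}\,v}(a)\D v.
\end{equation*}
Since $\alpha$ is isometric, $\|\alpha_{\sqrt{\hbar}v}(a)\|_k=\|a\|_k$, so for each $k$ the integrand is pointwise dominated by $2\|a\|_k\E^{-g(v,v)}$. Strong continuity of $\alpha$ gives $\alpha_{\sqrt{\hbar}v}(a)\to a$ as $\hbar\searrow0$, and dominated convergence (applied to the Banach space completion of $\mathcal{A}$ in each seminorm $\|\cdot\|_k$, or equivalently to each $\|\cdot\|_k$-valued integral) yields $S_\hbar(a)\to a$ in $\mathcal{A}$. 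For the stronger statement in $\mathcal{A}^\infty$ I would observe that because $V$ is abelian, $\alpha_u$ commutes with the infinitesimal generators, so $\partial^\beta S_\hbar(a)=S_\hbar(\partial^\beta a)$; applying the previous argument to each $\partial^\beta a$ gives convergence in every seminorm $\|\cdot\|_{k,\mu}$.

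The derivative formula is the main computational step. I would work directly with $G_\hbar$ in the original variables, differentiating under the integral — this is legitimate because the resulting integrand has Gaussian decay, uniformly on compact $\hbar$-intervals bounded away from $0$, and the $\mathcal{A}$-factor $\alpha_u(a)$ has bounded seminorm. A direct computation gives
\begin{equation*}
    \frac{\partial}{\partial\hbar}G_\hbar(u)=\frac{1}{\hbar}\left[\frac{g(u,u)}{\hbar}-n\right]G_\hbar(u).
\end{equation*}
The key observation is that this prefactor is precisely $\tfrac{1}{4}$ times the Euclidean Laplacian $\sum_{ij}(G^{-1})^{ij}\partial_i\partial_j$ applied to $G_\hbar$: expanding $\partial_i\partial_j \E^{-g(u,u)/\hbar}$ and using $\sum_{ij}(G^{-1})^{ij}g_{ij}=2n$ together with $\sum_{ij}(G^{-1})^{ij}(gu)_i(gu)_j=g(u,u)$, one finds
\begin{equation*}
    \frac{1}{4}\sum_{ij}(G^{-1})^{ij}\partial_i\partial_j G_\hbar(u)=\frac{\partial}{\partial\hbar}G_\hbar(u).
\end{equation*}
Integrating by parts twice (boundary terms vanish by the Gaussian decay) moves the derivatives onto $\alpha_u(a)$, and because $V$ is abelian one has $\partial_i\partial_j\alpha_u(a)=\alpha_u(\partial_i\partial_j a)$. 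Putting everything together gives exactly $\frac{\D}{\D\hbar}S_\hbar(a)=S_\hbar(\tfrac{1}{4}\Delta_g a)$; that the identity holds in the $\mathcal{A}^\infty$-topology follows as in the first part by commuting further $\partial^\beta$'s past $S_\hbar$.

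Coordinate independence of $\Delta_g$ is then a routine tensorial check: under a linear change $u=Au'$ the matrix $g_{ij}$ transforms by $g'=A^\top gA$, so $(G'^{-1})^{ij}$ contracts covariantly against the transformed generators $\partial'_i=A^{k}{}_i\partial_k$, leaving $\sum_{ij}(G^{-1})^{ij}\partial_i\partial_j$ invariant. The main obstacle in the whole argument is justifying the differentiation-under-the-integral and the integration by parts in the $\mathcal{A}^\infty$-topology; this is where one invokes Proposition~\ref{proposition:ContinuityOfS}, applied to the $\mathcal{A}^\infty$-valued functions $u\mapsto G_\hbar(u)\alpha_u(a)$ and their derivatives, to make the exchange of limits rigorous.
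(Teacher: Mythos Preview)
Your argument is correct. For the convergence part you and the paper do the same thing: rescale $u\mapsto\sqrt{\hbar}\,v$ and then run a dominated-convergence argument (the paper writes it out by hand as a compact-plus-tail split, whereas you invoke the theorem directly; your remark that $\partial^\beta$ commutes with $S_\hbar$ is a clean way to upgrade to the $\mathcal{A}^\infty$-topology).

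For the derivative formula your route differs from the paper's. The paper keeps the rescaled form $S_\hbar(a)=\pi^{-n}\sqrt{\det G}\int \E^{-g(u,u)}\alpha_{\sqrt{\hbar}u}(a)\,\D u$, differentiates the \emph{action} in $\hbar$ to produce a factor $u_i/(2\sqrt{\hbar})$, trades $u_i\E^{-g(u,u)}$ for $-\tfrac{1}{2}(G^{-1})^{ij}\partial_j\E^{-g(u,u)}$, and integrates by parts once. You instead stay in the original variables, differentiate the \emph{kernel} $G_\hbar$ in $\hbar$, recognize the heat-equation identity $\partial_\hbar G_\hbar=\tfrac{1}{4}\sum_{ij}(G^{-1})^{ij}\partial_i\partial_j G_\hbar$, and integrate by parts twice. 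Your approach is slightly more conceptual (it makes the heat-semigroup interpretation of $S_\hbar\simeq\E^{\hbar\Delta_g/4}$ transparent) at the cost of one extra integration by parts; the paper's approach is a shade more direct but hides the heat-kernel structure. Both rely on Proposition~\ref{proposition:ContinuityOfS} to justify the exchange of limits, exactly as you note.
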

\begin{proof}
    By substitution $u \longrightarrow \sqrt{\hbar} u$ we have
    \[
    S_{\hbar}(a) 
    =
    \frac{\sqrt{\det{G}}}{\pi^n} \int_V  
    \E^{-g(u,u)} \alpha_{\sqrt{\hbar}u}(a) \D u.
    \]
    To exchange the order of integration and $\lim\limits_{\hbar
      \searrow 0}$ we consider
    \[
    \begin{split}
        &\norm{
          \int_V \E^{-g(u,u)}(\alpha_{\sqrt{\hbar}u}(a)-a) \D u 
        }_{k,\mu}\\
        &\quad\leq
        \int_K \E^{-g(u,u)} 
        \norm{\alpha_{\sqrt{\hbar}u}(a)-a}_{k,\mu} \D u 
        +
        \int_{V\setminus K} \E^{-g(u,u)}
        \norm{\alpha_{\sqrt{\hbar}u}(a)-a}_{k,\mu} \D u,
    \end{split}
    \]
    where $K$ denotes a compact set in $V$. For $\hbar \searrow 0$
    the function $\alpha_{\hbar}(a): u \mapsto
    \alpha_{\sqrt{\hbar}u}(a)$ converges uniformly to the constant
    function $u \mapsto a$ on every compact set in $V$. Furthermore,
    since $\alpha$ is isometric, the estimate
    $\norm{\alpha_{\sqrt{\hbar}u}(a) - a}_{k, \mu} \leq 2\norm{a}_{k,
      \mu}$ holds for all $u\in V$. Thus, choosing $K$ large enough makes
    the second term small, independently of $\hbar$.  Afterwards,
    choosing $\hbar$ small brings the first term for the fixed $K$
    below every positive bound.  By the choice of the normalization
    constant in front of the Gauß function this shows
    \eqref{eq:ShbarToIdAinftyTop}. The case for $a \in \mathcal{A}$ is
    analogous. For the last statement we first note that for a fixed
    $a$ the differentiation in $V$-directions is a limit in
    $C^\infty_u(V, \mathcal{A}^\infty)$. By the linearity and
    continuity of $\tilde{S}_g$ as in
    Proposition~\ref{proposition:ContinuityOfS} we can thus exchange
    differentiation and the integral. This gives
    \begin{align*}
        \frac{\D}{\D\hbar} S_{\hbar} a 
        &= 
        \frac{\D}{\D\hbar}
        \int_V \frac{\sqrt{\det{G}}}{\pi^n} 
        \E^{-g(u,u)} 
        \alpha_{\sqrt{\hbar}u}(a) \D u \\
        &=
        \frac{\sqrt{\det{G}}}{\pi^n} \int_V
        \E^{-g(u,u)}
        \sum_i \frac{u_i}{2\sqrt{\hbar}}
        \left.
            \frac{\partial}{\partial u_i} \alpha_{u}(a)
        \right|_{\sqrt{\hbar} u}
        \D u \\ 
        &=
        - \frac{1}{4 \sqrt{\hbar}} \frac{\sqrt{\det{G}}}{\pi^n} 
        \int_V 
        \sum_{i, j} (G^{-1})^{ij} \frac{\partial}{\partial u_j} \E^{-g(u,u)}
        \left.
            \frac{\partial}{\partial u_i} \alpha_{u}(a)
        \right|_{\sqrt{\hbar} u}
        \D u \\ 
        &=
        \frac{1}{4} \frac{\sqrt{\det{G}}}{\pi^n}
        \E^{-g(u,u)}
        \left.
            (G^{-1})^{ij} \frac{\partial^2}{\partial u_i \partial u_j}
            \alpha_u(a)
        \right|_{\sqrt{\hbar}u}
        \D u \\
        &=
        \frac{1}{4} \frac{\sqrt{\det{G}}}{\pi^n}
        \E^{-g(u,u)}
        \alpha_{\sqrt{\hbar}u}
        \left(
            \left.
                (G^{-1})^{ij}
                \frac{\partial^2}{\partial v_i \partial v_j}
                \alpha_v(a)
            \right|_{v = 0}
        \right)
        \D u,
    \end{align*}
    where we have used an integration by parts as well as the fact
    that $\alpha$ is an action. Note that the operator $\Delta_g$ is
    well-defined on $\mathcal{A}^\infty$. This completes the proof.
\end{proof}

Since with $a \in \mathcal{A}^\infty$ we also have $\Delta_g a \in
\mathcal{A}^\infty$, the iteration of \eqref{eq:iteration1}
immediately yields the following statement:
\begin{theorem}[Asymptotic expansion of $S_{\hbar}$]
    \label{theorem:AsymptoticOfShbar}
    The operator $S_\hbar:\mathcal{A}^\infty$ $\longrightarrow
    \mathcal{A}^\infty$ has the formal asymptotic expansion
    \begin{equation}
        \label{eq:AsymptoticShbar}
        S_{\hbar}
        \simeq_{\hbar\searrow 0}
        \E^{\frac{\hbar}{4}\Delta_g}
    \end{equation}
    with respect to the topology of $\mathcal{A}^\infty$.
\end{theorem}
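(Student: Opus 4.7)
The plan is to iterate the differential identity \eqref{eq:iteration1} and apply Taylor's theorem with integral remainder in the Fr\'echet space $\mathcal{A}^\infty$, with the remainder controlled via a contraction property of $S_\hbar$ in every seminorm.

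\textbf{Step 1 (iterated derivatives).} Since $\Delta_g$ maps $\mathcal{A}^\infty$ into itself, an induction on $N$ using Lemma~\ref{lemma:asymp} applied to $\tfrac{1}{4^N}\Delta_g^N a \in \mathcal{A}^\infty$ shows that $\hbar \mapsto S_\hbar a$ is smooth from $(0,\infty)$ into $\mathcal{A}^\infty$ and satisfies
\begin{equation*}
    \frac{\D^N}{\D\hbar^N}\bigl(S_\hbar a\bigr)
    =
    S_\hbar\!\left(\frac{1}{4^N}\Delta_g^N a\right).
\end{equation*}
The first part of Lemma~\ref{lemma:asymp} (applied to $\tfrac{1}{4^N}\Delta_g^N a$) shows that this derivative extends continuously to $\hbar = 0^+$ with value $\tfrac{1}{4^N}\Delta_g^N a$, matching the Taylor coefficients of $\E^{\frac{\hbar}{4}\Delta_g}a$.

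\textbf{Step 2 (Taylor expansion).} By Taylor's theorem with integral remainder, valid for any $C^{N+1}$-function into a Fr\'echet space,
\begin{equation*}
    S_\hbar a
    -
    \sum_{n=0}^{N} \frac{\hbar^n}{n!\,4^n}\Delta_g^n a
    =
    \frac{1}{N!\,4^{N+1}}
    \int_0^\hbar (\hbar - s)^N\, S_s\!\left(\Delta_g^{N+1}a\right) \D s.
\end{equation*}

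\textbf{Step 3 (remainder estimate).} To bound the remainder in a seminorm $\norm{\cdot}_{k,\mu}$, note that $S_s$ is a contraction: because $\alpha$ is isometric in $\norm{\cdot}_k$ and commutes with the partial derivatives $\partial^\beta$ (being a group action), it is also isometric in $\norm{\cdot}_{k,\mu}$; since $G_s$ integrates to $1$, this yields $\norm{S_s b}_{k,\mu} \leq \norm{b}_{k,\mu}$. Consequently,
\begin{equation*}
    \norm{ S_\hbar a - \sum_{n=0}^{N} \frac{\hbar^n}{n!\,4^n}\Delta_g^n a }_{k,\mu}
    \leq
    \frac{\hbar^{N+1}}{(N+1)!\,4^{N+1}}\, \norm{\Delta_g^{N+1}a}_{k,\mu}
    =
    O(\hbar^{N+1}),
\end{equation*}
which is exactly the asymptotic expansion claimed in \eqref{eq:AsymptoticShbar}.

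The only point requiring genuine care is Step~1: the iteration must be carried out with respect to the finer $\mathcal{A}^\infty$-topology in which the theorem is formulated, not the coarser topology of $\mathcal{A}$. Once this higher regularity is secured, Taylor's formula together with the contraction property reduces the theorem to the displayed estimate.
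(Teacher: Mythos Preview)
Your proof is correct and follows the same approach as the paper: iterate the differential identity \eqref{eq:iteration1} from Lemma~\ref{lemma:asymp} and use the limit in \eqref{eq:ShbarToIdAinftyTop} to identify the Taylor coefficients. The paper's own proof is the single sentence ``the iteration of \eqref{eq:iteration1} immediately yields the following statement'', so you have simply made explicit what the paper leaves implicit --- in particular the Taylor formula with integral remainder and the contraction bound $\norm{S_s b}_{k,\mu}\le\norm{b}_{k,\mu}$, which gives a clean $O(\hbar^{N+1})$ remainder.
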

This means that the asymptotic expansion of $S_{\hbar}$ corresponds to
the formal equivalence transformation leading from the Weyl star
product to the Wick product, see
e.g.~\cite[Eq.~(5.84)]{waldmann:2007a}.

%
%

\section{Deformation of Positive Functionals}
\label{sec:DeformationPosFun}

Recall that a functional $\omega: \mathcal{A} \longrightarrow
\mathbb{C}$ is called positive if for all $a \in \mathcal{A}$ we have
\begin{equation}
    \label{eq:PositiveFunctional}
    \omega(a^*a) \ge 0.
\end{equation}
While this is a purely algebraic definition, for a topological algebra
$\mathcal{A}$ we require furthermore that $\omega$ is
\emph{continuous}. An algebra element $a \in \mathcal{A}$ is called
\emph{positive} if $\omega(a) \ge 0$ for all (continuous) positive
functionals $\omega$. The positive algebra elements will be denoted by
$\mathcal{A}^+$. Note that for general $^*$-algebras a definition of
positivity like $a = b*b$ will not lead to a reasonable notion of
positive elements due to the lack of a functional calculus. Note also
that the above definition coincides with the usual definition of
positive elements in case of a $C^*$-algebra.  There are even more
sophisticated notions of positivity, e.g.\ for $O^*$-algebras, see the
discussion in \cite{schmuedgen:1990a}. However, for our purposes the
above definition will be sufficient as for $C^*$-algebras positive
functionals are always continuous.

We can now use the operator $S_\hbar$ to deform a positive functional
of $\mathcal{A}$ into a positive functional with respect to
$\star_\hbar$. To this end we observe the following lemma:
\begin{lemma}
    \label{lemma:Shbarastara}
    For $a \in \mathcal{A}^\infty$ we have
    \begin{equation}
        \label{eq:Shbarastara}
        \begin{split}
            &S_\hbar (a^* \star_\hbar a) \\
            &\quad=
            \frac{1}{(\pi\hbar)^{2n}} \int_{V \times V}
            \E^{-\frac{1}{\hbar}g(v,v)}
            \alpha_v(a^*)
            \E^{-\frac{1}{\hbar}g(w,w)}
            \alpha_w(a)
            \E^{\frac{2}{\hbar}(g(v,w) + \I\theta(v,w))} \D v \D w.
        \end{split}
    \end{equation}
\end{lemma}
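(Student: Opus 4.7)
The plan is to unfold both outer layers of the definition and combine the resulting triple integral into the claimed form by an affine change of variables followed by a Gaussian integration over the $S_\hbar$-smearing parameter.

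Concretely, I substitute \eqref{eq:ShbarDef} and \eqref{eq:TheStarProduct} to write
\[
S_\hbar(a^* \star_\hbar a)
=
\frac{\sqrt{\det G}}{(\pi\hbar)^{3n}}
\int_V \int_{V\times V}
\E^{-\frac{1}{\hbar}g(s,s)}
\alpha_s\bigl(\alpha_{p}(a^*)\alpha_{q}(a)\bigr)
\E^{\frac{2\I}{\hbar}\theta(p,q)}
\D p\,\D q\,\D s.
\]
Since $\alpha$ is an action by $^*$-homomorphisms of the abelian group $V$, the inner integrand equals $\alpha_{s+p}(a^*)\alpha_{s+q}(a)$. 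I then perform the change of variables $v=s+p$, $w=s+q$ (unit Jacobian at fixed $s$) in the inner integral; the algebraic factors become $\alpha_v(a^*)\alpha_w(a)$, and the phase transforms by bilinearity and antisymmetry of $\theta$ as $\theta(v-s,w-s)=\theta(v,w)+\theta(s,v-w)$.

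The only $s$-dependence then resides in the scalar Gaussian with a purely imaginary linear term, which I evaluate by completing the square and shifting the integration contour; this yields the normalizing constant $(\pi\hbar)^n/\sqrt{\det G}$ together with an exponential Gaussian contribution depending on $v-w$. Combining the surviving exponentials with $\E^{2\I\theta(v,w)/\hbar}$ and using the polarization identity $-g(v-w,v-w)=-g(v,v)-g(w,w)+2g(v,w)$ to split the Gaussian into the form appearing in the statement yields the right-hand side, with the combined prefactors collapsing to $(\pi\hbar)^{-2n}$.

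The main technical obstacle is justifying the Fubini swap of the three integrations and the complex contour shift in an oscillatory rather than absolutely convergent setting. I would handle this within Rieffel's oscillatory integral framework \cite{rieffel:1993a}, where the necessary interchange theorems are available; alternatively, one can multiply the integrand by a Schwartz cut-off in $(p,q)$, invoke ordinary Fubini for the now absolutely convergent integral, and pass to the limit using the continuity of $S_\hbar$ and $\star_\hbar$ on $\mathcal{A}^\infty$ from Proposition~\ref{proposition:ContinuityOfS} together with dominated convergence to commute the limit with the fixed $s$-Gaussian.
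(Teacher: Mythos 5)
Your route is the same as the paper's: the published proof is only the one--line remark that the claim follows ``using the fact that $\alpha$ is an action as well as a linear change of coordinates and a Fourier transform of the Gau{\ss} function'', and your unfolding of \eqref{eq:ShbarDef} and \eqref{eq:TheStarProduct}, the translation $v=s+p$, $w=s+q$, and the scalar Gaussian integral over $s$ is exactly that computation, with a more explicit (and adequate) discussion of the Fubini/oscillatory-integral interchange than the paper offers. The bookkeeping you state is right: $\theta(v-s,w-s)=\theta(v,w)+\theta(s,v-w)$, the Jacobian is $1$, and the prefactors $\frac{\sqrt{\det G}}{(\pi\hbar)^{3n}}\cdot\frac{(\pi\hbar)^n}{\sqrt{\det G}}$ collapse to $(\pi\hbar)^{-2n}$.

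The one substantive step you assert rather than verify is the shape of the ``Gaussian contribution depending on $v-w$''. Writing $\theta(s,x)=s^{T}\Theta x$ and $g(s,s)=s^{T}Gs$, the $s$-integral gives $\frac{(\pi\hbar)^n}{\sqrt{\det G}}\,\E^{-\frac{1}{\hbar}(v-w)^{T}\Theta^{T}G^{-1}\Theta\,(v-w)}$, i.e.\ the quadratic form that appears is the $\theta$-dual metric $\Theta^{T}G^{-1}\Theta$, not $g$ itself. Your polarization step, and hence the right-hand side of \eqref{eq:Shbarastara}, requires the identity $\Theta^{T}G^{-1}\Theta=G$, which is precisely the compatibility condition $g(u,v)=\theta(u,Jv)$ with a complex structure $J$ (introduced in the paper only \emph{after} this lemma and assumed from \eqref{eq:ghInNiceCoords} onward, and explicitly in Theorem~\ref{theorem:positiv}). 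For a non-compatible $g$ the stated formula is false: on $\mathbb{R}^2$ with the standard $\theta$ and $g=\lambda(u_1^2+u_2^2)$, $\lambda\neq 1$, one gets $\Theta^{T}G^{-1}\Theta=\lambda^{-1}\,\mathrm{id}\neq G$. So your proof is complete only once you add the explicit check $\Theta^{T}G^{-1}\Theta=G$ under the compatibility hypothesis (under which the lemma is in fact used); as written, the decisive identity is the only point of the computation that is neither stated nor proved.
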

\begin{proof}
    The proof is a straightforward computation using the fact that
    $\alpha$ is an action as well as a linear change of coordinates
    and a Fourier transform of the Gauß function.
\end{proof}

In the particular case that $g$ and $\theta$ are \emph{compatible},
i.e.\ $g(u,v) = \theta(u, Jv)$ with a complex structure $J$, the
combination $h(u,v) = g(u,v) + \I\theta(u,v)$ is known to be a
Hermitian metric on the complex vector space $(V,J)$.  In this
particular case there exists a symplectic basis
$\{e_1,\dots,e_n,f_1,\dots,f_n\}$ of $V$ with coordinates $q^i$ and
$p^i$ and there exist complex coordinates $z^i = q^i + \I p^i$ and
$\cc{z}^i = q^i - \I p^i$ such that
\begin{equation}
    \label{eq:ghInNiceCoords}
    g(u, u) = \sum\limits_i z_u^i \cc{z}_u^i = \norm{z_u}^2
    \quad
    \textrm{and}
    \quad
    h(v,w) = \sum\limits_i \cc{z}_v^i z_w^i.
\end{equation}
From now on we assume that $g$ is compatible with $\theta$.  Using
these coordinates, the above integral can be rewritten as
\begin{equation}
    \label{eq:ShbarastaraComplex}
    S_\hbar(a^* \star_\hbar a)
    =
    \frac{1}{(\pi\hbar)^{2n}}
    \int_{V \times V}
    \E^{-\frac{1}{\hbar}\norm{z_v}^2} 
    \alpha_v (a^*)
    \E^{-\frac{1}{\hbar}\norm{z_w}^2}
    \alpha_w(a)
    \E^{\frac{2}{\hbar}\bar z_v\cdot z_w}
    \D v \D w.
\end{equation}
\begin{lemma}
    \label{lemma:ExchangeExpSeriesIntegral}
    For $a \in \mathcal{A}^\infty$ we have
    \begin{equation}
        \label{eq:ShbarSeries}
        S_\hbar (a^* \star_\hbar a)
        =
        \sum_{L \ge 0} \frac{2^{|L|}}{L!} a_L^* a^{\phantom{*}}_L
    \end{equation}
    with respect to the $\mathcal{A}^\infty$-topology where for a
    multi-index $L = (l_1, \ldots, l_n)$ one defines
    \begin{equation}
        \label{eq:aKdef}
        a^{\phantom{*}}_L = 
        \frac{1}{\pi^n} \int_V
        \E^{-\norm{z_v}^2}
        z_v^L \alpha_{\sqrt{\hbar} v}(a) \D v.
    \end{equation}
\end{lemma}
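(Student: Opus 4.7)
The plan is to start from the complex-coordinate representation \eqref{eq:ShbarastaraComplex} of Lemma~\ref{lemma:Shbarastara}, rescale the integration variables to remove $\hbar$ from the Gaussians, Taylor-expand the remaining cross-exponential, and identify the resulting factored integrals with $a_L^*$ and $a_L$.

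First I would substitute $v\mapsto\sqrt{\hbar}\,v$ and $w\mapsto\sqrt{\hbar}\,w$ in \eqref{eq:ShbarastaraComplex}. Since $z_v$ is $\mathbb{R}$-linear in $v$ and $\dim V=2n$, the product $\hbar^n\cdot\hbar^n$ of the two Jacobians cancels the prefactor $(\pi\hbar)^{-2n}$ down to $\pi^{-2n}$, removes $\hbar$ from the Gaussian exponents, and replaces $\alpha_v,\alpha_w$ by $\alpha_{\sqrt{\hbar}v},\alpha_{\sqrt{\hbar}w}$. I would then Taylor-expand
\[
\E^{2\cc{z}_v\cdot z_w}
 = \sum_{k\ge 0}\frac{(2\cc{z}_v\cdot z_w)^k}{k!}
 = \sum_{L\ge 0}\frac{2^{|L|}}{L!}\,\cc{z}_v^L z_w^L
\]
via the multinomial expansion of $\cc{z}_v\cdot z_w=\sum_i\cc{z}_v^i z_w^i$. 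Each single summand is a polynomial in $v,w$ against a product of genuine Gaussians and thus absolutely integrable, so Fubini factors the corresponding double integral into a $v$-integral times a $w$-integral. Because the $^*$-involution is continuous and antilinear and because $\alpha$ acts by $^*$-homomorphisms, the $v$-integral yields $a_L^*$,
\[
\frac{1}{\pi^n}\!\int_V\! \E^{-\norm{z_v}^2}\cc{z}_v^L\,\alpha_{\sqrt{\hbar}v}(a^*)\,\D v
 = \biggl(\frac{1}{\pi^n}\!\int_V\! \E^{-\norm{z_v}^2}z_v^L\,\alpha_{\sqrt{\hbar}v}(a)\,\D v\biggr)^{\!*}= a_L^*,
\]
while the $w$-integral is $a_L$ by inspection, yielding the claimed expansion.

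The hard part is justifying the exchange of $\sum_L$ with the double integration in the topology of $\mathcal A^{\infty}$: the full integrand is \emph{not} absolutely integrable, since completing the square shows that the real part of the exponent equals $-g(v-w,v-w)$, providing decay only in $v-w$ and not in $v+w$; the integral in Lemma~\ref{lemma:Shbarastara} is genuinely an oscillatory integral. To handle this I would work with partial sums and the integral form of the Taylor remainder
\[
\E^{2\cc{z}_v\cdot z_w}-\sum_{k\le N}\frac{(2\cc{z}_v\cdot z_w)^k}{k!}
 = \frac{(2\cc{z}_v\cdot z_w)^{N+1}}{N!}\int_0^1(1-s)^N\,\E^{2s\cc{z}_v\cdot z_w}\,\D s,
\]
exchange the $s$- and $(v,w)$-integrations, and for each $s\in[0,1)$ exploit that the weight $\E^{-\norm{z_v}^2-\norm{z_w}^2+2s\cc{z}_v\cdot z_w}$ has real part $-\norm{z_v-sz_w}^2-(1-s^2)\norm{z_w}^2$, which is genuinely Gaussian on $V\times V$. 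Combined with the isometry of $\alpha$ and the Fréchet-algebra continuity of the product on $\mathcal A^{\infty}$, which bounds $\norm{\alpha_{\sqrt{\hbar}v}(a^*)\alpha_{\sqrt{\hbar}w}(a)}_{k,\mu}$ uniformly in $v,w$ by seminorms of $a$, this yields a seminorm estimate of the remainder; the degenerate limit $s\to 1$ is tamed by marrying the $(1-s)^N$ prefactor with the regularization underlying Rieffel's oscillatory integral from \cite{rieffel:1993a}, after which letting $N\to\infty$ sends the remainder to zero in every $\norm{\cdot}_{k,\mu}$ and establishes the identity.
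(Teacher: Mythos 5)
Your algebraic skeleton is exactly the paper's route: rescaling $v,w$ by $\sqrt{\hbar}$ in \eqref{eq:ShbarastaraComplex} to remove the $\hbar$'s, the multinomial expansion $\E^{2\cc{z}_v\cdot z_w}=\sum_L \frac{2^{|L|}}{L!}\cc{z}_v^L z_w^L$, termwise factorization of the double integral, and the identification of the $v$-factor with $a_L^*$ via $\alpha_u(a)^*=\alpha_u(a^*)$ --- all of this is correct and matches the paper. The paper then disposes of the interchange of $\sum_L$ with the integration much more briefly, by the same device as in Lemma~\ref{lemma:asymp}: split the integration into a compact part, where the Taylor series converges uniformly, and the complement, which is controlled by the Gau{\ss} decay. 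Your observation that the absolute value of the integrand decays only in the $v-w$ direction (real part of the exponent $=-g(v-w,v-w)/\hbar$) is a fair comment on how terse that justification is; but the issue is that your replacement argument does not close the gap either.

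Concretely, the remainder scheme fails at its last step. For fixed $s<1$, after the substitution $x=z_v-s z_w$ the best absolute bound on the $(v,w)$-integral of $\E^{-\norm{z_v}^2-\norm{z_w}^2+2s\cc{z}_v\cdot z_w}\,|2\cc{z}_v\cdot z_w|^{N+1}$ is of order $C_N\,(1-s^2)^{-(N+1)-n}$, because $|\cc{z}_v\cdot z_w|^{N+1}\le(\norm{x}\norm{z_w}+s\norm{z_w}^2)^{N+1}$ contributes up to $\norm{z_w}^{2(N+1)}$ against the degenerate Gau{\ss}ian $\E^{-(1-s^2)\norm{z_w}^2}$; and this reflects a genuine feature of the region $z_v\approx s z_w$, $\norm{z_w}\to\infty$, not a crude estimate. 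Multiplying by $(1-s)^N$ therefore still leaves a non-integrable singularity of order $(1-s)^{-n-1}$ at $s=1$, so the $(1-s)^N$ prefactor cannot ``tame'' the degeneration, and the phrase about marrying it with the regularization underlying Rieffel's oscillatory integral is precisely where the entire analytic difficulty sits --- near $s=1$ one must actually use the oscillation $\E^{2\I s\,\theta(v,w)}$ (or reorganize the computation, e.g.\ perform the inner $w$-integration first and exploit the exact Gau{\ss}ian cancellation of the growth), and no such estimate is supplied. As it stands, your proposal establishes the identity only modulo the very interchange of limits it set out to justify, whereas the paper settles (or at least asserts) this point by the uniform-convergence-on-compacta argument borrowed from Lemma~\ref{lemma:asymp}.
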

\begin{proof}
    First note that rescaling the variables in
    \eqref{eq:ShbarastaraComplex} by $\sqrt{\hbar}$ allows to get rid
    of the negative powers of $\hbar$. Then \eqref{eq:ShbarSeries} is
    obtained from expanding the exponential function $\E^{2 \cc{z}_v
      \cdot z_w}$ into the Taylor series and exchanging summation and
    integration. The fact that the latter exchange of limits is allowed
    follows from a similar argument as in the proof of
    Lemma~\ref{lemma:asymp}: First we split the integration into two
    parts, one over a compact subset $K \subseteq V$ and the other
    over $V \setminus K$. On $K$ the Taylor expansion converges
    uniformly including all derivatives. Outside $K$, the Gauß
    function decays fast enough to over-compensate the exponential
    increase. Thus first choosing $K$ large enough to make the second
    integral small then using the uniform convergence gives the
    result. Note that the convergence is in the sense of
    $\mathcal{A}^\infty$.
\end{proof}
\begin{theorem}[Positive Deformation of $\omega$]
    \label{theorem:positiv}    
    Let $g$ be a compatible positive definite inner product and
    $S_\hbar$ the corresponding operator as in \eqref{eq:ShbarDef}.
    \begin{enumerate}
    \item For every continuous positive linear functional $\omega:
        \mathcal{A} \longrightarrow \mathbb{C}$ the functional
        \begin{equation}
            \label{eq:omegahbarDef}
            \omega_\hbar = \omega \circ S_\hbar:
            \mathcal{A}^\infty \longrightarrow \mathbb{C}
        \end{equation}
        is positive and continuous in the
        $\mathcal{A}^\infty$-topology.
    \item For every $a \in \mathcal{A}^\infty$ we have
        \begin{equation}
            S_{\hbar}(a^* \star_\hbar a) \in \mathcal{A}^+.
        \end{equation}
    \end{enumerate}
\end{theorem}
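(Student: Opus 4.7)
The plan is to reduce both statements to Lemma \ref{lemma:ExchangeExpSeriesIntegral}, which already expresses $S_\hbar(a^* \star_\hbar a)$ as a convergent series of \emph{undeformed} squares. That representation does almost all the work: positivity with respect to the undeformed product transfers to $\omega_\hbar$-positivity with respect to the deformed product, essentially for free. I would therefore prove part \emph{ii.)} first and then deduce \emph{i.)} from it.

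For part \emph{ii.)}, I fix $a \in \mathcal{A}^\infty$ and pick an arbitrary continuous positive linear functional $\omega$ on $\mathcal{A}$. By Lemma \ref{lemma:ExchangeExpSeriesIntegral} the identity
\[
S_\hbar(a^* \star_\hbar a) = \sum_{L \geq 0} \frac{2^{|L|}}{L!}\, a_L^* a^{\phantom{*}}_L
\]
holds in the $\mathcal{A}^\infty$-topology, which is finer than the $\mathcal{A}$-topology, so the series also converges in $\mathcal{A}$. Since $\omega$ is continuous on $\mathcal{A}$, I may commute $\omega$ with the infinite sum and obtain
\[
\omega\bigl(S_\hbar(a^* \star_\hbar a)\bigr)
= \sum_{L \geq 0} \frac{2^{|L|}}{L!}\, \omega\bigl(a_L^* a^{\phantom{*}}_L\bigr).
\]
Here each $a_L \in \mathcal{A}^\infty \subseteq \mathcal{A}$ (the integral \eqref{eq:aKdef} lies in $\mathcal{A}^\infty$ by Proposition~\ref{proposition:ContinuityOfS}, \emph{iv.)}), and the products $a_L^* a_L$ are in the \emph{undeformed} algebra. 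Positivity of $\omega$ on $\mathcal{A}$ forces each summand $\omega(a_L^* a_L) \geq 0$, and together with the positive coefficients $2^{|L|}/L! > 0$ the whole sum is nonnegative. Since $\omega$ was an arbitrary continuous positive functional, this is exactly the statement $S_\hbar(a^* \star_\hbar a) \in \mathcal{A}^+$.

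Part \emph{i.)} is then immediate. Positivity of $\omega_\hbar = \omega \circ S_\hbar$ is nothing but the chain
\[
\omega_\hbar(a^* \star_\hbar a) = \omega\bigl(S_\hbar(a^* \star_\hbar a)\bigr) \geq 0,
\]
using part \emph{ii.)} (or directly the computation above). Continuity of $\omega_\hbar$ on $\mathcal{A}^\infty$ follows by composing three continuous maps: $S_\hbar \colon \mathcal{A}^\infty \to \mathcal{A}^\infty$ is continuous by Proposition~\ref{proposition:ContinuityOfS}, \emph{ii.)}; the inclusion $\mathcal{A}^\infty \hookrightarrow \mathcal{A}$ is continuous by construction of the smooth-vector seminorms \eqref{eq:SeminormsSmoothVectors}; and $\omega$ is continuous on $\mathcal{A}$ by assumption.

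The only nontrivial step is the termwise application of $\omega$ to the series, i.e.\ the exchange of $\omega$ with the infinite sum. This is where I expect a referee to look twice, but it is harmless: once Lemma~\ref{lemma:ExchangeExpSeriesIntegral} provides convergence in the $\mathcal{A}^\infty$-topology, the composition with a continuous linear functional on $\mathcal{A}$ (which is automatically continuous on $\mathcal{A}^\infty$) passes through the sum. Thus no further oscillatory-integral manipulations are needed beyond those already performed in the previous two lemmas.
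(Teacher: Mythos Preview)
Your proof is correct and follows essentially the same approach as the paper: both rely on Lemma~\ref{lemma:ExchangeExpSeriesIntegral} to write $S_\hbar(a^*\star_\hbar a)$ as a convergent series of undeformed squares, then use continuity of $\omega$ to pass through the sum. The only cosmetic difference is that you prove \emph{ii.)} first and derive \emph{i.)} from it, whereas the paper does the reverse; the underlying argument is identical.
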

\begin{proof}
    Let $\omega: \mathcal{A} \longrightarrow \mathbb{C}$ be positive
    and continuous. Since the topology of $\mathcal{A}^\infty$ is finer
    than the original one, $\omega: \mathcal{A}^\infty \longrightarrow
    \mathbb{C}$ is still continuous. Then $\omega(S_\hbar(a^*
    \star_\hbar a)) \ge 0$ follows immediately from
    \eqref{eq:ShbarSeries} and the continuity of $\omega$. Moreover,
    since $S_\hbar$ is continuous the first part follows. Thus the
    second part is clear.
\end{proof}
\begin{corollary}
    \label{corollary:AsymptoticOfomegahbar}
    Let $\omega: \mathcal{A} \longrightarrow \mathbb{C}$ be a positive
    and continuous linear functional. Then on $\mathcal{A}^\infty$,
    $\omega_\hbar = \omega \circ S_\hbar$ has the asymptotic expansion
    \begin{equation}
        \label{eq:omegaAsymptotic}
        \omega_\hbar 
        \simeq_{\hbar\searrow 0}
        \sum_{r=0}^\infty \frac{1}{r!} \left(\frac{\hbar}{4}\right)^r 
        \omega \circ \Delta_g^r       
    \end{equation}
    in the $\mathcal{A}^\infty$-topology.
\end{corollary}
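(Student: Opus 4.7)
The plan is to derive the expansion of $\omega_\hbar$ directly from Theorem~\ref{theorem:AsymptoticOfShbar} by composing with the continuous linear functional $\omega$. The key observation is that once an asymptotic expansion of $S_\hbar$ is available in the Fr\'echet topology of $\mathcal{A}^\infty$, any continuous seminorm estimate transports to a scalar asymptotic expansion of $\omega \circ S_\hbar$ without further work.

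First I would unpack \eqref{eq:AsymptoticShbar} concretely: it means that for every $a \in \mathcal{A}^\infty$, every defining seminorm $\norm{\cdot}_{k,\mu}$ of $\mathcal{A}^\infty$, and every $N \in \mathbb{N}$, the remainder
\begin{equation*}
    R_N(\hbar, a)
    =
    S_\hbar(a) - \sum_{r=0}^{N-1} \frac{1}{r!}\left(\frac{\hbar}{4}\right)^r \Delta_g^r a
\end{equation*}
lies in $\mathcal{A}^\infty$ and satisfies $\norm{R_N(\hbar, a)}_{k,\mu} = O(\hbar^N)$ as $\hbar \searrow 0$. This is precisely the iterative content of \eqref{eq:iteration1}: applying that identity $N$ times and using a Taylor remainder in $\hbar$ yields the required estimate, which is how Theorem~\ref{theorem:AsymptoticOfShbar} was obtained.

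Next I would use continuity of $\omega$. Since $\omega: \mathcal{A} \longrightarrow \mathbb{C}$ is continuous and the $\mathcal{A}^\infty$-topology is finer than that of $\mathcal{A}$, the restriction $\omega: \mathcal{A}^\infty \longrightarrow \mathbb{C}$ is still continuous, so there exist indices $k,\mu$ and a constant $C > 0$ with $|\omega(b)| \le C \norm{b}_{k,\mu}$ for all $b \in \mathcal{A}^\infty$. Applying this to $b = R_N(\hbar, a)$ gives
\begin{equation*}
    \left|
        \omega_\hbar(a) -
        \sum_{r=0}^{N-1}\frac{1}{r!}\left(\frac{\hbar}{4}\right)^r\omega(\Delta_g^r a)
    \right|
    =
    |\omega(R_N(\hbar, a))|
    =
    O(\hbar^N),
\end{equation*}
which is exactly the asymptotic expansion \eqref{eq:omegaAsymptotic}. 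There is no real obstacle here; the only point requiring care is to verify that the $\mathcal{A}^\infty$-topology is indeed finer than the $\mathcal{A}$-topology, so that continuity of $\omega$ is inherited, but this is immediate from \eqref{eq:SeminormsSmoothVectors} since $\norm{\cdot}_k = \norm{\cdot}_{k,0} \le \norm{\cdot}_{k,\mu}$.
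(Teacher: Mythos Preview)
Your proof is correct and follows exactly the intended route: the paper states this as a corollary with no proof, treating it as immediate from Theorem~\ref{theorem:AsymptoticOfShbar} by composing with the continuous functional $\omega$. You have simply spelled out the one line of reasoning (continuity of $\omega$ in the $\mathcal{A}^\infty$-topology transports the remainder estimate) that the paper leaves implicit.
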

\begin{remark}
    \label{remark:AsinFormalDQ}
    This kind of formal positive deformation of a positive functional
    was discussed in \cite{bursztyn.waldmann:2000a} based on the
    formal equivalence between the Weyl and Wick star product.
\end{remark}

%
%

\section{The Operator $S$ in the $C^*$-Case}
\label{sec:CstarCase}

In a next step we want to apply Theorem~\ref{theorem:positiv} to the
more particular case of a $C^*$-algebraic deformation. Let
$\mathfrak{A}$ be a unital $C^*$-algebra endowed with an isometric and
strongly continuous action of $V$ by $^*$-automorphisms. Then Rieffel
has shown how to construct a $C^*$-norm on the Fr\'echet $^*$-algebra
$\mathcal{A}(\hbar) = (\mathfrak{A}^\infty, \star_\hbar, ^*)$.  In
general, $\mathcal{A}(\hbar)$ is not complete. The norm completion of
$\mathcal{A}(\hbar)$ will then be denoted by $\mathfrak{A}(\hbar)$.

We briefly recall the construction of the $C^*$-norm on
$\mathcal{A}(\hbar)$. Let $\mathcal{S}(V, \mathfrak{A})\subseteq
C^\infty_u(V, \mathfrak{A})$ be the subset of functions which are
still in $C^\infty_u(V, \mathfrak{A})$ when multiplied by arbitrary
polynomials on $V$. For $f, g \in \mathcal{S}(V, \mathfrak{A})$ one
defines the $\mathfrak{A}$-valued inner product
\begin{equation}
    \label{eq:AvaluedInnerProduct}
    \SP{f, g} = \int_V f(v)^* g(v) \D v,
\end{equation}
which makes $\mathcal{S}(V, \mathfrak{A})$ into a pre-Hilbert right
$\mathfrak{A}$-module, see e.g.\ \cite{lance:1995a} for details on
Hilbert modules. In particular, by
\begin{equation}
    \label{eq:HilbertModuleNorm}
    \norm{f}_{\mathcal{S}} = \sqrt{\norm{\SP{f, f}}}
\end{equation}
one obtains a norm on $\mathcal{S}(V, \mathfrak{A})$, where the norm
on the right hand side is the $C^*$-norm of $\mathfrak{A}$. Using this
norm, Rieffel showed that for every $F \in C^\infty_u(V,
\mathfrak{A})$ the operator
\begin{equation}
    \label{eq:FstarOperator}
    F \star_\hbar \cdot : \mathcal{S}(V, \mathfrak{A}) 
    \ni f \; \mapsto \; F \star_\hbar f \in
    \mathcal{S}(V, \mathfrak{A})
\end{equation}
is continuous with respect to $\norm{\cdot}_{\mathcal{S}}$ and
adjointable with adjoint given by $F^* \star_\hbar \cdot$. Since for
$a \in \mathfrak{A}^\infty$ the function $u \mapsto \alpha_u(a)$ is in
$C^\infty_u(V, \mathfrak{A})$ we obtain an induced operator on the
pre-Hilbert module $\alpha(a) \star_\hbar \cdot$ which is continuous
and adjointable. A final computation then shows that $a \mapsto
\alpha(a) \star_\hbar \cdot$ is a $^*$-homomorphism with respect to
the deformed product $\star_\hbar$ of $\mathfrak{A}^\infty$. This
allows to define 
\begin{equation}
    \label{eq:CstarNormhbar}
    \norm{a}_\hbar = \norm{\alpha(a) \star_\hbar \cdot},
\end{equation}
where on the right hand side we use the operator norm. Since it is
well known that the continuous and adjointable operators on a
(pre-)Hilbert module constitute a $C^*$-algebra, Rieffel arrives at a
$C^*$-norm $\norm{\cdot}_\hbar$ for $\mathcal{A}(\hbar)$.

We want to show that the operator $S_\hbar$ being defined only on
$\mathcal{A}(\hbar)$ is also continuous in the $C^*$-norm and thus
extends to $\mathfrak{A}(\hbar)$.  To show the continuity of
$S_{\hbar}$ we will need the following lemma that shows that there is
a star root of the Gauß function.
\begin{lemma}
    \label{lemma:RootGauss}
    Let $G_\hbar$ be the normalized Gauß function as in
    \eqref{eq:NormalizedGauss} used to define the operator
    $S_{\hbar}$. Then we have
    \begin{equation}
        \label{eq:RootOfGauss}
        G_\hbar \star G_\hbar
        =
        \frac{1}{(2\pi\hbar)^n} \frac{1}{\sqrt{\det G}} G_\hbar.        
    \end{equation}
\end{lemma}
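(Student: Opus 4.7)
The plan is a direct Gaussian computation based on the oscillatory integral formula \eqref{eq:TheStarProduct} for $\star_\hbar$, interpreted with $V$ acting on a suitable algebra of functions on $V$ by translation. With this interpretation the formula becomes, for smooth enough $f,g$,
\begin{equation*}
(f\star_\hbar g)(x)
=\frac{1}{(\pi\hbar)^{2n}}\int_{V\times V}
f(x+u)\,g(x+v)\,\E^{\frac{2\I}{\hbar}\theta(u,v)}\,\D u\,\D v,
\end{equation*}
and substituting \eqref{eq:NormalizedGauss} into both slots produces an overall prefactor $\det G\cdot(\pi\hbar)^{-4n}$ times a pure Gaussian integral in $(u,v)$ whose exponent reads $-\frac{1}{\hbar}g(x+u,x+u)-\frac{1}{\hbar}g(x+v,x+v)+\frac{2\I}{\hbar}\theta(u,v)$. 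The entire argument then reduces to an explicit Gaussian integration together with bookkeeping of constants.

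First I would change variables to $s=u+v$, $t=u-v$ (Jacobian factor $2^{-2n}$), so as to decouple the $x$-dependence from the symplectic cross-term. Using $g(u,u)+g(v,v)=\frac{1}{2}g(s,s)+\frac{1}{2}g(t,t)$ and $\theta(u,v)=-\frac{1}{2}\theta(s,t)$, the exponent rewrites as $-\frac{2}{\hbar}g(x,x)-\frac{2}{\hbar}g(x,s)-\frac{1}{2\hbar}g(s,s)-\frac{1}{2\hbar}g(t,t)-\frac{\I}{\hbar}\theta(s,t)$. Invoking the compatibility $\theta(\cdot,\cdot)=g(J\cdot,\cdot)$, the $t$-integrand is a complex Gaussian with linear term $-\frac{\I}{\hbar}g(Js,t)$; completing the square and using $g(Js,Js)=g(s,s)$ produces an extra factor $\E^{-g(s,s)/(2\hbar)}$, and a contour shift (justified because the integrand is entire with genuine real Gaussian decay) reduces the $t$-integral to $\int_V\E^{-g(t,t)/(2\hbar)}\,\D t=(2\pi\hbar)^n/\sqrt{\det G}$. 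The remaining $s$-exponent $-\frac{2}{\hbar}g(x,x)-\frac{2}{\hbar}g(x,s)-\frac{1}{\hbar}g(s,s)$ is rewritten by completing the square as $-\frac{1}{\hbar}g(x,x)-\frac{1}{\hbar}g(s+x,s+x)$, so the $s$-integral evaluates to $\E^{-g(x,x)/\hbar}\cdot(\pi\hbar)^n/\sqrt{\det G}$.

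Assembling the prefactor $\det G/(\pi\hbar)^{4n}$, the Jacobian $2^{-2n}$, and the two Gaussian values makes all powers of $\det G$ and $(\pi\hbar)^n$ cancel except for an overall $1/(2^n(\pi\hbar)^{2n})$, giving $(G_\hbar\star_\hbar G_\hbar)(x)=\frac{1}{2^n(\pi\hbar)^{2n}}\E^{-g(x,x)/\hbar}$, which by \eqref{eq:NormalizedGauss} equals $\frac{1}{(2\pi\hbar)^n\sqrt{\det G}}G_\hbar(x)$, as claimed. The only step requiring genuine justification is the Fubini exchange together with the contour shift in the complex $t$-integral, but both are standard since the integrand is entire in $(u,v)$ with a real Gaussian decay that dominates all oscillatory and polynomial factors, so the argument stays comfortably inside Rieffel's framework. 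The main concrete obstacle will therefore be the somewhat tedious bookkeeping of the various powers of $\pi$, $\hbar$, $\det G$, and $2$ introduced by the successive Gaussian evaluations.
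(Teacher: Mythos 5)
Your computation is correct and is exactly the ``straightforward and well-known'' Gaussian calculation the paper delegates to Fedosov (Prop.~3.3.1): interpreting $\star_\hbar$ via the translation action on functions on $V$, changing to $s=u\pm v$, completing squares, and using the compatibility $g(J\cdot,J\cdot)=g(\cdot,\cdot)$ (which the paper assumes from Section~3 on) reproduces the stated identity, and I checked that your constants $2^{-2n}$, $(2\pi\hbar)^n/\sqrt{\det G}$ and $(\pi\hbar)^n/\sqrt{\det G}$ indeed assemble to $\frac{1}{(2\pi\hbar)^n\sqrt{\det G}}G_\hbar$. The only points needing care, the Fubini exchange and the contour shift, are unproblematic because of the genuine Gaussian decay, as you note, so the proof is complete and essentially the paper's (outsourced) argument made explicit.
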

\begin{proof}
    The proof is a straightforward and well-known computation, see
    e.g. \cite[Prop.~3.3.1]{fedosov:1996a}.
\end{proof}
From equation \eqref{eq:ShbarSeries} and the trivial fact that
$\sqrt{\det G} a_{L=0} = S_\hbar(a)$ we obtain the following
statement:
\begin{lemma}[Leading order of $S_{\hbar}(a^*\star_{\hbar} a)$]
    \label{lemma:LeadingOrder}
    For $a \in \mathcal{A}^\infty$ we have
    \begin{equation}
        S_{\hbar}(a^*\star_\hbar a)
        =
        \frac{1}{\det G} S_{\hbar}(a^*)S_{\hbar}(a) + b,
    \end{equation}
    where $b \in \mathcal A^+$ is positive.
\end{lemma}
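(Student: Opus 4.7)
The plan is to start from the series expansion in Lemma~\ref{lemma:ExchangeExpSeriesIntegral},
\[
S_{\hbar}(a^* \star_\hbar a) = \sum_{L \ge 0} \frac{2^{|L|}}{L!} a_L^* a^{\phantom{*}}_L,
\]
and simply peel off the $L = 0$ term. Comparing the stated identity $\sqrt{\det G}\, a^{\phantom{*}}_{L = 0} = S_\hbar(a)$ with the rescaling $u \mapsto \sqrt{\hbar} u$ used in the proof of Lemma~\ref{lemma:asymp} confirms that $a^{\phantom{*}}_0 = \frac{1}{\sqrt{\det G}} S_\hbar(a)$. Since the Gau{\ss} function $G_\hbar$ is real and $\alpha$ is a $^*$-action, $S_\hbar$ commutes with the involution, so $a_0^* = \frac{1}{\sqrt{\det G}} S_\hbar(a^*)$, which gives the claimed leading term $\frac{1}{\det G} S_\hbar(a^*) S_\hbar(a)$.

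Set
\[
b = \sum_{|L| \ge 1} \frac{2^{|L|}}{L!} a_L^* a^{\phantom{*}}_L.
\]
To see that $b \in \mathcal A^+$, I would fix an arbitrary continuous positive functional $\omega : \mathcal A \longrightarrow \mathbb{C}$. Since the $\mathcal A^{\infty}$-topology is finer than the $\mathcal A$-topology, $\omega$ remains continuous on $\mathcal A^{\infty}$, and the series \eqref{eq:ShbarSeries} converges in $\mathcal A^{\infty}$. Hence
\[
\omega(b) = \sum_{|L| \ge 1} \frac{2^{|L|}}{L!} \omega\bigl(a_L^* a^{\phantom{*}}_L\bigr),
\]
and every summand is non-negative by the very definition of a positive functional applied to $a_L \in \mathcal A^\infty \subseteq \mathcal A$.

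Putting the two observations together yields the decomposition in the statement, with $b$ positive. The only point requiring genuine attention is the exchange of $\omega$ with the infinite sum, but this is immediate from the continuity of $\omega$ on $\mathcal A^{\infty}$ together with the $\mathcal A^{\infty}$-convergence already established in Lemma~\ref{lemma:ExchangeExpSeriesIntegral}; no new analysis is needed. I do not anticipate any further obstacles.
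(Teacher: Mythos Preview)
Your proposal is correct and follows exactly the route the paper takes: the paper's entire argument is the sentence immediately preceding the lemma, ``From equation \eqref{eq:ShbarSeries} and the trivial fact that $\sqrt{\det G}\, a_{L=0} = S_\hbar(a)$ we obtain the following statement.'' You have simply spelled out the details---the identification of the $L=0$ term, the compatibility of $S_\hbar$ with the involution, and the positivity of the tail via continuous positive functionals (the latter being the same mechanism used in the proof of Theorem~\ref{theorem:positiv}).
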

\begin{theorem}
    \label{theorem:continuity}
    Let $(\mathfrak{A}, \cdot, \norm{\cdot})$ be a $C^*$-algebra with
    isometric and strongly continuous action $\alpha$ of $V$ and let
    $\mathcal{A}(\hbar)=(\mathfrak{A}^\infty, \star_\hbar,
    \norm{\cdot}_\hbar)$ be the Rieffel deformed pre-$C^*$-algebra.
    Then the operator
    \begin{equation}
        \label{eq:ShbarContinuous}
        S_{\hbar}: \mathcal{A}(\hbar)\longrightarrow \mathfrak{A}
    \end{equation}
    is a continuous operator in the $C^*$-norms of $\mathcal{A}(\hbar)$
    and $\mathfrak{A}$.
\end{theorem}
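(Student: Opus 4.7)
The plan is to realize $S_\hbar$ (up to a positive scalar) as a completely positive map between the $C^*$-algebras $\mathcal{A}(\hbar)$ and $\mathfrak{A}$; continuity then follows for free from the general boundedness of CP maps. On Rieffel's right Hilbert $\mathfrak{A}$-module $\mathcal{S}(V,\mathfrak{A})$, the prescription $\pi(a) = \alpha(a)\star_\hbar \cdot$ defines an isometric $*$-representation of $\mathcal{A}(\hbar)$ whose operator norm equals $\norm{a}_\hbar$ by the very definition of the deformed $C^*$-norm. For any $\psi \in \mathcal{S}(V,\mathfrak{A})$ the Stinespring-type formula $\Phi_\psi(a) := \SP{\psi, \pi(a)\psi}$ therefore defines a CP map $\mathcal{A}(\hbar) \to \mathfrak{A}$, and the $\mathfrak{A}$-valued Cauchy-Schwarz inequality in the module gives the automatic bound $\norm{\Phi_\psi(a)}_{\mathfrak{A}} \leq \norm{\psi}_{\mathcal{S}}\,\norm{\pi(a)\psi}_{\mathcal{S}} \leq \norm{\psi}_{\mathcal{S}}^{2}\,\norm{a}_\hbar$.

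The task thus reduces to producing a $\psi$ for which $\Phi_\psi$ agrees with a positive multiple of $S_\hbar$ on $\mathfrak{A}^\infty$. Lemma~\ref{lemma:RootGauss} dictates the natural candidate, namely the pointwise square root of the Gaussian $\psi(u) = K_\hbar\,\E^{-g(u,u)/(2\hbar)}\cdot \Unit_{\mathfrak{A}}$, with $K_\hbar$ chosen so that $\SP{\psi,\psi} = \Unit_{\mathfrak{A}}$. The identification $\SP{\psi,\alpha(a)\star_\hbar \psi} = c_\hbar\,S_\hbar(a)$ then reduces to a direct Gaussian integral: plugging in the explicit formula for the function star product on $\mathcal{S}(V,\mathfrak{A})$, the oscillating phase $\E^{2\I\theta(v,w)/\hbar}$ combines with the two Gaussian factors via a Fourier transform of a Gaussian to produce, after one change of variables, a single normalized Gaussian weight against $u\mapsto \alpha_u(a)$. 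The resulting integral equals, up to an explicit positive constant $c_\hbar$, precisely $S_\hbar(a)$.

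Combining these two ingredients yields the desired continuity bound $\norm{S_\hbar(a)}_{\mathfrak{A}} \leq c_\hbar^{-1}\,\norm{a}_\hbar$ on $\mathfrak{A}^\infty$, and by density $S_\hbar$ extends to a bounded operator $\mathfrak{A}(\hbar) \to \mathfrak{A}$ as claimed. The main technical obstacle is the Gaussian integral identifying $\Phi_\psi$ with $c_\hbar S_\hbar$; Lemma~\ref{lemma:LeadingOrder} provides a welcome consistency check, since the Kadison-Schwarz inequality it asserts is precisely what CP-ness of $\Phi_\psi$ guarantees (after dividing by $c_\hbar^{2}$), so any miscalculation of constants in the identification would be flagged immediately.
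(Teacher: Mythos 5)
Your overall strategy (realize $S_\hbar$ as a completely positive map $a \mapsto \SP{\psi, \alpha(a)\star_\hbar \psi}$ on Rieffel's Hilbert module and invoke the module Cauchy--Schwarz bound against $\norm{a}_\hbar$) is sound and is in substance what the paper does; but your choice of vector $\psi$ is wrong, and this is a genuine gap rather than a constant-bookkeeping issue. You take $\psi(u) = K_\hbar \E^{-g(u,u)/(2\hbar)}\Unit$, the \emph{pointwise} square root of the Gaussian. However, after using the trace identity $\int_V f\star_\hbar h = \int_V fh$ (twice, to cycle), your map is
\begin{equation*}
    \SP{\psi, \alpha(a)\star_\hbar \psi}
    = \int_V \alpha_u(a)\,\bigl(\psi \star_\hbar \cc{\psi}\bigr)(u)\,\D u,
\end{equation*}
so everything hinges on $\psi\star_\hbar\cc{\psi}$ being proportional to $G_\hbar$. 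It is not: Gaussian widths do not add pointwise under $\star_\hbar$. A direct Weyl--Moyal Gaussian computation (in compatible coordinates with $g(u,u)=|u|^2$) gives $\E^{-\lambda|u|^2}\star_\hbar \E^{-\lambda|u|^2} \propto \exp\bigl(-\tfrac{2\lambda}{1+\hbar^2\lambda^2}|u|^2\bigr)$; for $\lambda = \tfrac{1}{2\hbar}$ this is $\propto \E^{-\frac{4}{5\hbar}|u|^2}$, so your $\Phi_\psi$ is a positive multiple of $S_{5\hbar/4}$, not of $S_\hbar$. Consequently your argument proves continuity of $S_{5\hbar/4}$ with respect to $\norm{\cdot}_\hbar$, which is not the statement, and cannot be repaired by relating $\norm{\cdot}_\hbar$ and $\norm{\cdot}_{5\hbar/4}$. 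Note also that the only compatible Gaussian whose $\star_\hbar$-square is proportional to $\E^{-g/\hbar}$ is $\E^{-g/\hbar}$ itself ($\lambda = 1/\hbar$ is the fixed point of $\lambda \mapsto \tfrac{2\lambda}{1+\hbar^2\lambda^2}$), so no other width rescue is available.

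The fix is exactly what Lemma~\ref{lemma:RootGauss} provides: the relevant square root must be taken with respect to $\star_\hbar$, and $G_\hbar$ is its own $\star_\hbar$-square root up to the explicit constant in \eqref{eq:RootOfGauss}. Taking $\psi \propto G_\hbar\,\Unit$ instead, one gets $\psi\star_\hbar\cc{\psi} \propto G_\hbar$, hence $\SP{\psi,\alpha(a)\star_\hbar\psi} = c_\hbar S_\hbar(a)$, and your Cauchy--Schwarz bound then yields $\norm{S_\hbar a} \le c_\hbar^{-1}\norm{\psi}_{\mathcal S}^2 \norm{a}_\hbar$. With this correction your proof coincides in all essentials with the paper's: the paper uses the same three ingredients (the trace identity \eqref{eq:starkgeschlossen}, Lemma~\ref{lemma:RootGauss}, and $\norm{\alpha(a)\star_\hbar\cdot} = \norm{a}_\hbar$), only routed through $\norm{S_\hbar a}^2 = \norm{(S_\hbar a)^*(S_\hbar a)} \le \det(G)\,\norm{S_\hbar(a^*\star_\hbar a)}$ via Lemma~\ref{lemma:LeadingOrder} and the identification of $S_\hbar(a^*\star_\hbar a)$ with $\SP{\alpha(a)\star_\hbar G_\hbar, \alpha(a)\star_\hbar G_\hbar}$, i.e.\ it evaluates the CP map at $a^*\star_\hbar a$ rather than at $a$. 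Your final remark about Lemma~\ref{lemma:LeadingOrder} as a consistency check is apt -- and indeed, with your $\psi$ that check fails, which is precisely the symptom of the wrong square root.
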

\begin{proof}
    Since $\mathfrak{A}$ is a $C^*$-algebra, we have $\norm{S_{\hbar}
      a}^2 = \norm{(S_\hbar a)^*(S_\hbar a)}$. From
    Lemma~\ref{lemma:LeadingOrder} it follows that $(S_\hbar a)^*(S_\hbar
    a) \le \det(G) S_\hbar(a^* \star_\hbar a)$ in the sense of
    positive elements in $\mathfrak{A}$. From this it follows that the
    same holds for the norms, i.e.\ $\norm{(S_\hbar a)^*(S_\hbar a)}
    \le \det(G) \norm{S_\hbar(a^* \star_\hbar a)}$. In order to
    compute the last norm we need the following fact that
    \begin{equation}
    \label{eq:starkgeschlossen}
    \int_V f \star_\hbar g = \int_V fg
    \end{equation}
    for all $f, g \in \mathcal{S}(V, \mathfrak{A})$, see \cite[Lemma
    3.8]{rieffel:1993a}. Moreover, thanks to the fast decay of
    functions in $\mathcal{S}(V, \mathfrak{A})$, Equation
    \eqref{eq:starkgeschlossen} still holds if one of them is in
    $C^\infty_u(V, \mathfrak{A})$.  Using this and
    Lemma~\ref{lemma:RootGauss} we find
    \begin{align*}
        \norm{S_\hbar(a^* \star_\hbar a)}
        &=
        \det(G)
        \norm{
          \int_V (G_\hbar \star_\hbar \alpha(a^* \star_\hbar a)) 
          (u) \D u
        } \\
        &=
        (2\pi\hbar)^n (\det(G))^{\frac{3}{2}}
        \norm{\int_V
          \left(
              G_\hbar \star_\hbar G_\hbar \star_\hbar \alpha(a^*
              \star_\hbar a)
          \right) (u) \D u
        } \\
        &=
        (2\pi\hbar)^n (\det(G))^{\frac{3}{2}}
        \norm{\int_V
          \left(
              G_\hbar \star_\hbar \alpha(a)^* \star_\hbar \alpha(a)
              \star_\hbar G_\hbar
          \right) (u) \D u
        } \\
        &=
        (2\pi\hbar)^n (\det(G))^{\frac{3}{2}}
        \norm{
          \SP{
            \alpha(a) \star_\hbar G_\hbar,
            \alpha(a) \star_\hbar G_\hbar
          }
        } \\
        &\le
        (2\pi\hbar)^n (\det(G))^{\frac{3}{2}}
        \norm{G_\hbar}^2_{\mathcal{S}}
        \norm{a}^2_\hbar,
    \end{align*}
    by observing that $G_\hbar$ is central for the \emph{undeformed}
    pointwise product of $C^\infty_u(V, \mathfrak{A})$. Thus we have
    the desired continuity
    \begin{equation}
        \label{eq:OperatorNormShbar}
        \norm{S_\hbar a}^2 \le
        (2\pi\hbar)^n (\det(G))^{\frac{3}{2}}
        \norm{G_\hbar}^2_{\mathcal{S}}
        \norm{a}^2_\hbar.
    \end{equation}
\end{proof}
\begin{corollary}
    \label{corollary:omegaToomegahbar}
    Let $\omega: \mathfrak{A} \longrightarrow \mathbb{C}$ be a
    positive linear functional of the undeformed $C^*$-algebra. Then
    $\omega_\hbar = \omega \circ S_\hbar: \mathcal{A}(\hbar)
    \longrightarrow \mathbb{C}$ is continuous with respect to
    $\norm{\cdot}_\hbar$ and extends to a positive linear functional
    $\omega_\hbar: \mathfrak{A}(\hbar) \longrightarrow \mathbb{C}$.
\end{corollary}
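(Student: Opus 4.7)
The plan is to chain together the ingredients already established: $\omega$ is automatically norm-continuous on the $C^*$-algebra $\mathfrak{A}$ (every positive linear functional on a $C^*$-algebra is continuous), and by Theorem~\ref{theorem:continuity} the operator $S_\hbar : \mathcal{A}(\hbar) \longrightarrow \mathfrak{A}$ is continuous with respect to $\norm{\cdot}_\hbar$ and $\norm{\cdot}$. Hence the composition $\omega_\hbar = \omega \circ S_\hbar : \mathcal{A}(\hbar) \longrightarrow \mathbb{C}$ is $\norm{\cdot}_\hbar$-continuous, and since $\mathfrak{A}(\hbar)$ is by definition the $\norm{\cdot}_\hbar$-completion of $\mathcal{A}(\hbar)$, this functional extends uniquely by continuity to a bounded linear functional on $\mathfrak{A}(\hbar)$, which I will still denote by $\omega_\hbar$.

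It remains to show positivity of the extension. On the dense subalgebra $\mathcal{A}(\hbar) = \mathfrak{A}^\infty$, positivity of $\omega_\hbar$ on elements of the form $a^* \star_\hbar a$ is precisely Theorem~\ref{theorem:positiv}. For an arbitrary positive element $b \in \mathfrak{A}(\hbar)^+$, one writes $b = c^* \star_\hbar c$ for some $c \in \mathfrak{A}(\hbar)$ (taking, e.g., $c = \sqrt{b}$ via the $C^*$-functional calculus in $\mathfrak{A}(\hbar)$) and approximates $c$ in $\norm{\cdot}_\hbar$ by a sequence $c_n \in \mathcal{A}(\hbar)$. Continuity of the involution and product of the $C^*$-algebra $\mathfrak{A}(\hbar)$ then gives $c_n^* \star_\hbar c_n \to b$ in $\norm{\cdot}_\hbar$, and continuity of the extended $\omega_\hbar$ yields
\begin{equation*}
    \omega_\hbar(b)
    = \lim_{n \to \infty} \omega_\hbar(c_n^* \star_\hbar c_n)
    \ge 0,
\end{equation*}
since each term in the sequence is non-negative by Theorem~\ref{theorem:positiv}.

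None of the steps is expected to be technically difficult: the continuity of positive functionals on $C^*$-algebras is a classical result, the norm estimate for $S_\hbar$ is the content of the preceding theorem, and the positive-cone argument only relies on density of $\mathcal{A}(\hbar)$ in $\mathfrak{A}(\hbar)$ together with continuity of the $^*$-algebra structure. The only small point to watch is that one really does work inside the $C^*$-algebra $\mathfrak{A}(\hbar)$ when invoking the functional calculus for $\sqrt{b}$, but this is automatic since $\mathfrak{A}(\hbar)$ is a $C^*$-algebra by Rieffel's construction.
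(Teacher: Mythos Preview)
Your argument is correct and matches the paper's intent: the corollary is stated there without proof, as an immediate consequence of Theorem~\ref{theorem:continuity} (continuity of $S_\hbar$ in the $C^*$-norms) together with Theorem~\ref{theorem:positiv} (positivity on $\mathfrak{A}^\infty$) and the automatic continuity of positive functionals on $C^*$-algebras. Your density-and-approximation argument for extending positivity to all of $\mathfrak{A}(\hbar)^+$ is exactly the standard reasoning the paper leaves implicit.
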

Thus we have constructed for every classical state $\omega$ a
corresponding quantum state using the operator $S_\hbar$.  We shall
also use the symbol 
\begin{equation}
    \label{eq:ShbarExtension}
    S_\hbar: 
    \mathfrak{A}(\hbar) \longrightarrow \mathfrak{A}
\end{equation}
for the extension of the operator $S_\hbar$ to the completions in the
corresponding $C^*$-topologies.

%
%

\section{Continuous Fields of States}
\label{sec:FieldOfStates}

In a last step we want to discuss in which sense $\omega_\hbar$ can be
considered as a deformation of $\omega$: clearly we have
$\omega (a) = \lim_{\hbar\searrow 0 }\omega_\hbar(a)$ pointwise for
every $a \in \mathcal{A}^\infty$ but we want to
show some continuity properties beyond that trivial observation.

One of the main results in Rieffel's work \cite{rieffel:1993a} is that
the deformed $C^*$-algebras $\{\mathfrak{A}(\hbar)\}_{\hbar \ge 0}$
actually yield a \emph{continuous field} in the sense of Dixmier
\cite{dixmier:1977b}: Recall that a continuous field
structure on a collection $\{\mathfrak{A}(\hbar)\}_{\hbar \ge 0}$ of
$C^*$-algebras consists in the choice of \emph{continuous sections}
$\Gamma \subseteq \prod_{\hbar \ge 0} \mathfrak{A}(\hbar)$ subject to
the following technical conditions: $\Gamma$ is a $^*$-algebra with
respect to the pointwise product of the sections and for each fixed
$\hbar$ the set of possible values $\{a(\hbar)\}_{a \in \Gamma}
\subseteq \mathfrak{A}(\hbar)$ is dense. For unital $C^*$-algebras, we
require that the unit section $\hbar \mapsto \Unit(\hbar) =
\Unit_{\mathfrak{A}(\hbar)}$ is always in $\Gamma$. Moreover, the
function $\hbar \mapsto \norm{a(\hbar)}_\hbar$ is continuous for all
$a \in \Gamma$.  Finally, if an arbitrary section $b \in \prod_{\hbar
  \ge 0} \mathfrak{A}(\hbar)$ can locally be approximated uniformly by
continuous sections, it is already continuous itself, i.e.\ if $b$ is a
section such that for all $\varepsilon > 0$ and all $\hbar_0$ there
exists an open neighborhood $U \subseteq [0, \infty)$ of $\hbar_0$ and
a continuous section $a \in \Gamma$ such that $\norm{a(\hbar) -
  b(\hbar)}_\hbar \le \varepsilon$ uniformly for all $\hbar \in U$,
then $b \in \Gamma$. It follows that $\Gamma$ necessarily contains
$C^0(\mathbb{R}^+_0)$.

In the case of the Rieffel deformation the $^*$-algebra of continuous
sections $\Gamma$ can be obtained from the ``constant'' sections
$a(\hbar) = a \in \mathfrak{A}^\infty$. In detail, one has the
following (technical) characterization:
\begin{proposition}
    \label{proposition:GammaNullGeneratesGamma}
    Let $\mathcal{A}(\hbar) = (\mathfrak{A}^\infty, \star_{\hbar},
    \norm{\cdot}_\hbar)$ be the Rieffel deformed pre-$C^*$-algebras
    and let $\{\mathfrak{A}(\hbar)\}_{\hbar \ge 0}$ be the
    corresponding field of $C^*$-algebras. Then let
    \begin{equation}
        \label{eq:GammaDef}
        \begin{split}
            \Gamma = 
            \Big\{ 
            b \in \prod_{\hbar \ge 0}\mathfrak{A}(\hbar)
            \; &\Big| \;
            \forall \varepsilon>0
            \forall \hbar_0 \ge 0
            \exists U(\hbar_0)
            \exists a \in \Gamma_0
            \forall \hbar\in U(\hbar_0): \\
            &\quad \norm{b(\hbar)-a(\hbar)}_{\hbar} \leq \varepsilon
            \Big\}
        \end{split}
    \end{equation}
    be the set of sections generated by the set $\Gamma_0$ of
    sections. Then for all three choices
    \begin{enumerate}
    \item $\Gamma_0 = \mathfrak{A}^\infty$
    \item $\Gamma_0 = C^0(\mathbb R_0^+) \otimes \mathfrak{A}^\infty$
    \item $\Gamma_0$ is the $^*$-algebra generated by the vector space
        $C^0(\mathbb R_0^+) \otimes \mathfrak{A}^\infty$ with respect
        to $\star_\hbar$.
    \end{enumerate}
    the set $\Gamma$ is the same and defines the structure of a
    continuous field.
\end{proposition}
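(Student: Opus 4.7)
The plan is to establish that the three candidate generating sets produce the same $\Gamma$, and then to read off the continuous field axioms. The key observation is that the generation procedure $\Gamma_0 \mapsto \Gamma$ defined by \eqref{eq:GammaDef} is monotonic (if $\Gamma_0 \subseteq \Gamma_0'$ then $\Gamma \subseteq \Gamma'$) and, more usefully, idempotent in the sense that if every element of $\Gamma_0$ can itself be locally uniformly approximated by elements of some other $\Gamma_0'$, then the generated $\Gamma$ is contained in the generated $\Gamma'$. The easy inclusions $\Gamma_0^{(i)} \subseteq \Gamma_0^{(ii)} \subseteq \Gamma_0^{(iii)}$ (using $f\equiv 1$, and the fact that (iii) is by construction the $^*$-algebra generated by (ii)) immediately give $\Gamma^{(i)} \subseteq \Gamma^{(ii)} \subseteq \Gamma^{(iii)}$.

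For the reverse inclusion $\Gamma^{(ii)} \subseteq \Gamma^{(i)}$, I would approximate the generator $\hbar\mapsto f(\hbar)\,a$ of $\Gamma_0^{(ii)}$, with $f\in C^0(\mathbb{R}^+_0)$ and $a\in\mathfrak{A}^\infty$, by the constant section $\hbar\mapsto f(\hbar_0)\,a$ near any $\hbar_0$. The error is $|f(\hbar)-f(\hbar_0)|\cdot \norm{a}_\hbar$, which tends to $0$ on sufficiently small neighborhoods of $\hbar_0$ because $f$ is continuous and $\hbar\mapsto \norm{a}_\hbar$ is continuous (hence locally bounded) for $a\in\mathfrak{A}^\infty$, as shown by Rieffel. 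For the remaining inclusion $\Gamma^{(iii)} \subseteq \Gamma^{(ii)}$, a generator of $\Gamma_0^{(iii)}$ is a finite sum of terms of the form $(f_1\otimes a_1)\star_\hbar\cdots\star_\hbar (f_k\otimes a_k) = (f_1\cdots f_k)\otimes (a_1\star_\hbar\cdots\star_\hbar a_k)$, possibly with $^*$'s. Locally at $\hbar_0$, I would approximate this by $(f_1\cdots f_k)\otimes (a_1\star_{\hbar_0}\cdots\star_{\hbar_0} a_k)$, which sits in $\Gamma_0^{(ii)}$ because $a_1\star_{\hbar_0}\cdots\star_{\hbar_0}a_k\in\mathfrak{A}^\infty$ is a fixed element.

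The main obstacle is the error estimate in the last step: one needs the continuity of $\hbar\mapsto a_1\star_\hbar\cdots\star_\hbar a_k$ at $\hbar_0$ measured in the varying norm $\norm{\cdot}_\hbar$. Inductively this reduces to the two-factor case, and the latter is precisely the heart of Rieffel's continuous field theorem: for $a,b\in\mathfrak{A}^\infty$ the section $\hbar\mapsto a\star_\hbar b$ is continuous in $\{\mathfrak{A}(\hbar)\}_{\hbar\ge 0}$. Subtracting the continuous constant section $\hbar\mapsto a\star_{\hbar_0} b$ (which is again continuous since $a\star_{\hbar_0} b\in\mathfrak{A}^\infty$), one obtains a continuous section vanishing at $\hbar_0$, whose norm is therefore arbitrarily small near $\hbar_0$. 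This is exactly the ingredient I would borrow from \cite{rieffel:1993a}.

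Finally, having identified the three sets, the continuous field axioms are verified on $\Gamma$ using the presentation via $\Gamma_0^{(iii)}$: closure under pointwise product and involution is preserved by local uniform approximation once it holds on the generators, density of $\{a(\hbar)\mid a\in\Gamma\}$ in each $\mathfrak{A}(\hbar)$ is already guaranteed by the constant sections since $\mathfrak{A}^\infty$ is dense in $\mathfrak{A}(\hbar)$, the unit section belongs to $\Gamma^{(i)}$ via $a=\Unit_{\mathfrak{A}}$, and the continuity of $\hbar\mapsto \norm{a(\hbar)}_\hbar$ for $a\in\Gamma$ follows since on $\Gamma_0$ it is Rieffel's continuity statement and on $\Gamma$ it is a locally uniform limit of such continuous functions. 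The closure of $\Gamma$ under local uniform approximation is built into the definition \eqref{eq:GammaDef}.
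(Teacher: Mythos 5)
Your argument is sound, but note that the paper itself offers no proof of this proposition: it is stated as a technical reformulation of Rieffel's continuous-field theorem, with all the analytic content deferred to \cite{rieffel:1993a}. Your bootstrap is exactly the natural way to fill this in: the two-step (transitivity) approximation argument justifies comparing the three generated sets, the inclusion $\Gamma^{(ii)}\subseteq\Gamma^{(i)}$ needs only local boundedness of $\hbar\mapsto\norm{a}_\hbar$ (from Rieffel's norm-continuity for constant sections), and the inclusion $\Gamma^{(iii)}\subseteq\Gamma^{(ii)}$ correctly isolates the one genuinely hard ingredient, namely $\norm{a\star_\hbar b-a\star_{\hbar_0}b}_\hbar\to 0$ for $a,b\in\mathfrak{A}^\infty$, which is precisely Rieffel's key estimate and cannot be obtained by soft arguments; your induction to $k$ factors then needs, and has, local boundedness of the norms of the remaining factors. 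Two small points deserve tightening. First, in verifying the field axioms you say norm-continuity on $\Gamma_0^{(iii)}$ ``is Rieffel's continuity statement''; literally it is not, since a generator is a finite sum of $\star_\hbar$-products of elementary tensors and its norm is not of the form $|f(\hbar)|\,\norm{a}_\hbar$. The clean route is the one your own structure suggests: having established $\Gamma^{(iii)}=\Gamma^{(i)}$, verify norm-continuity on constant sections (Rieffel) and propagate it to all of $\Gamma$ by the locally uniform limit argument, which you already state. Second, closure of $\Gamma$ under the pointwise product requires local boundedness of $\norm{b(\hbar)}_\hbar$ for arbitrary $b\in\Gamma$ in the mixed-term estimate; this follows from the approximation by generators with continuous norm functions and should be said explicitly. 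With these two clarifications your proof is complete and is, in substance, the argument the paper implicitly relies on.
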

In other words, the $^*$-algebra $\Gamma$ of continuous sections yields
the smallest continuous field built on the collection
$\{\mathfrak{A}(\hbar)\}_{\hbar \ge 0}$ which contains the constant
sections $a: \hbar \mapsto a(\hbar) = a \in \mathfrak{A}^\infty$. The
second choice of $\Gamma_0$ is the smallest
$C^0(\mathbb{R}^+_0)$-module while the last choice corresponds to the
smallest $^*$-algebra containing $\mathfrak{A}^\infty$ and
$C^0(\mathbb{R}^+_0)$. In the following we shall always refer to this
continuous field structure $\Gamma$.

Turning back to the states we want to show that the set of states
$\omega_\hbar = \omega \circ S_\hbar$, where $\omega: \mathfrak{A}
\longrightarrow \mathbb{C}$ is a classical state, form a continuous
field of states in the following sense, see e.g.\
\cite[Def.~1.3.1]{landsman:1998a}:
\begin{definition}[Continuous field of states]
    \label{definition:ContinuousFieldStates}
    A continuous field of states on a continuous field of
    $C^*$-algebras $( \{\mathfrak A(\hbar)\}_{\hbar \ge 0}, \Gamma)$
    is a family of states $\omega_{\hbar}$ on $\mathfrak{A}(\hbar)$
    such that
    \begin{equation}
        \label{eq:ContinuousFieldStates}
        \hbar \mapsto \omega_\hbar (a(\hbar))
    \end{equation}
    is continuous for every continuous section $a \in \Gamma$.
\end{definition}
\begin{lemma}
    \label{lemma:ContinuityOfShbarahbar}
    If $a \in \Gamma$ is a continuous section, then the map
    $\mathbb{R}_0^+ \ni \hbar \mapsto S_{\hbar} a(\hbar) \in
    \mathfrak{A}$ is continuous in the (undeformed) $C^*$-norm of
    $\mathfrak{A}$.
\end{lemma}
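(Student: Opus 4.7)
The plan is to use the characterization of $\Gamma$ from Proposition~\ref{proposition:GammaNullGeneratesGamma} to reduce the continuity of $\hbar \mapsto S_\hbar a(\hbar)$ to two easier statements: (a) a uniform bound on the operator norm of $S_\hbar$ from $\mathcal{A}(\hbar)$ to $\mathfrak{A}$, and (b) continuity of $\hbar \mapsto S_\hbar b$ in $\mathfrak{A}$ for a \emph{fixed} element $b \in \mathfrak{A}^\infty$.

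For (a), I would revisit the computation in the proof of Theorem~\ref{theorem:continuity}, where the bound $\norm{S_\hbar a}^2 \le (2\pi\hbar)^n (\det G)^{3/2} \norm{G_\hbar}^2_{\mathcal{S}} \norm{a}^2_\hbar$ was obtained. Since $G_\hbar$ is scalar-valued, $\SP{G_\hbar, G_\hbar} = \bigl(\int_V G_\hbar(u)^2 \D u\bigr) \Unit$, and a standard Gaussian computation gives $\norm{G_\hbar}^2_{\mathcal{S}} = \frac{\sqrt{\det G}}{(2\pi\hbar)^n}$. Substituting back yields the $\hbar$-independent estimate $\norm{S_\hbar a} \le \det G \cdot \norm{a}_\hbar$.

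For (b), fix $b \in \mathfrak{A}^\infty$. For $\hbar_0 > 0$ I would apply dominated convergence directly to $S_\hbar b = \int_V G_\hbar(u) \alpha_u(b) \D u$: on any compact neighborhood of $\hbar_0$ bounded away from $0$ the integrand is dominated in $C^*$-norm by a fixed integrable multiple of a Gaussian, and converges pointwise (in $u$) to $G_{\hbar_0}(u)\alpha_u(b)$, which gives $\norm{S_\hbar b - S_{\hbar_0} b} \to 0$. For $\hbar_0 = 0$ (with the convention $S_0 = \id$), I would use the rescaling $u \mapsto \sqrt{\hbar} u$ and argue exactly as in the proof of Lemma~\ref{lemma:asymp}, splitting the integral over a compact set $K$ (where $\alpha_{\sqrt{\hbar} u}(b) \to b$ uniformly by the smoothness of $b$) and its complement (controlled by Gaussian decay and the isometry $\norm{\alpha_{\sqrt{\hbar}u}(b)} = \norm{b}$); the $C^*$-norm is one of the defining seminorms of $\mathfrak{A}$, so the lemma's argument applies verbatim.

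With (a) and (b) in hand, fix $\hbar_0 \ge 0$ and $\varepsilon > 0$. By Proposition~\ref{proposition:GammaNullGeneratesGamma} applied with $\Gamma_0 = \mathfrak{A}^\infty$, there exist a neighborhood $U$ of $\hbar_0$ and a constant section $b \in \mathfrak{A}^\infty$ with $\norm{a(\hbar) - b}_\hbar \le \varepsilon$ for all $\hbar \in U$. Splitting
\[
S_\hbar a(\hbar) - S_{\hbar_0} a(\hbar_0)
= S_\hbar\bigl(a(\hbar) - b\bigr)
+ \bigl(S_\hbar b - S_{\hbar_0} b\bigr)
+ S_{\hbar_0}\bigl(b - a(\hbar_0)\bigr),
\]
the uniform bound from (a) controls the outer two terms by $2\varepsilon\det G$ for $\hbar \in U$, while the middle term tends to $0$ as $\hbar \to \hbar_0$ by (b). Since $\varepsilon$ is arbitrary, continuity follows. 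The main subtlety is the $\hbar_0 = 0$ case in (b), but the rescaling argument from Lemma~\ref{lemma:asymp} handles it cleanly; everything else is a standard $3\varepsilon$-argument once the uniform operator-norm bound is established.
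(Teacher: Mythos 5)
Your proof is correct, and it streamlines the paper's argument in two ways. First, you approximate $a$ locally uniformly by \emph{constant} sections, i.e.\ you invoke Proposition~\ref{proposition:GammaNullGeneratesGamma} with the choice $\Gamma_0 = \mathfrak{A}^\infty$, whereas the paper works with $\Gamma_0 = C^0(\mathbb{R}^+_0) \otimes \mathfrak{A}^\infty$ and therefore has to juggle the shifted sections $a_{\Delta\hbar}$ and the values $\sum_n f_n(\hbar') a_n$; with a constant approximant the three-term splitting collapses to the standard argument you give. Second, you make the bound of Theorem~\ref{theorem:continuity} explicit: computing $\norm{G_\hbar}^2_{\mathcal{S}} = \sqrt{\det G}/(2\pi\hbar)^n$ (correct, using $\dim V = 2n$) turns \eqref{eq:OperatorNormShbar} into the $\hbar$-independent estimate $\norm{S_\hbar a} \le \det G\, \norm{a}_\hbar$, whereas the paper only remarks that the constants $c(\hbar)$, $c(\hbar')$ are locally bounded near $\hbar \neq 0$. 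Your step (b) is essentially the paper's: for $\hbar_0 > 0$ your dominated-convergence argument is equivalent to the paper's estimate $\norm{(S_\hbar - S_{\hbar'})b} \le \norm{b} \int_V |G_\hbar(u) - G_{\hbar'}(u)|\, \D u$, and at $\hbar_0 = 0$ both proofs fall back on Lemma~\ref{lemma:asymp} (which in fact holds for all $b \in \mathfrak{A}$, so you need not even invoke smoothness of $b$). One cosmetic point: the constant $\det G$ in (a) is derived from the computation at $\hbar > 0$; at $\hbar = 0$ one has $S_0 = \id$ and $\norm{\cdot}_0 = \norm{\cdot}$, so if $\det G < 1$ you should use the uniform constant $\max(1, \det G)$ in the outer terms of your splitting when $0 \in U$ -- this does not affect the argument.
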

\begin{proof}
    Note that here we use the extension of $S_\hbar$ to the completion
    $\mathfrak{A}(\hbar)$. Moreover, by
    Proposition~\ref{proposition:GammaNullGeneratesGamma} we can
    approximate $a$ by sections in $\Gamma_0 = C^0(\mathbb{R}^+_0)
    \otimes \mathfrak{A}^\infty$.  First, we show the continuity at
    $\hbar \ne 0$.
    \begin{align*}
	\norm{S_{\hbar}a(\hbar) - S_{\hbar'}a(\hbar')}
        &\leq
        \norm{S_{\hbar}a(\hbar) - S_{\hbar}a(\hbar')}
        +
        \norm{S_{\hbar}a(\hbar') - S_{\hbar'}a(\hbar')} \\
	&=
        \norm{S_{\hbar}(a(\hbar) - a_{\Delta\hbar}(\hbar))}
        +
        \norm{(S_{\hbar} - S_{\hbar'})(a(\hbar'))} \\
        &\leq
        c(\hbar) \norm{a(\hbar) - a_{\Delta\hbar}(\hbar)}_{\hbar}
        + \norm{(S_{\hbar}-S_{\hbar'})(a(\hbar'))},
    \end{align*}
    where $a_{\Delta\hbar}(\hbar) = a(\hbar + \Delta\hbar)$ with
    $\Delta\hbar = \hbar' - \hbar$ and $c(\hbar)$ is the constant from
    the estimate \eqref{eq:OperatorNormShbar}. It is now easy to see
    that the section $a_{\Delta\hbar}$ is approximated by sections of
    the form $\sum_n \tau_{\Delta \hbar} f_n a_n$, where $(\tau_{\Delta
      \hbar} f_n)(\hbar) = f_n(\hbar + \Delta\hbar)$. Thus $a_{\Delta
      \hbar}$ is still in $\Gamma$ and approximates $a$ for
    $\Delta\hbar \longrightarrow 0$. Hence the first term becomes
    small for $\hbar' \longrightarrow \hbar$. The second term requires
    more attention. We can approximate $a$ by sections of the form
    $\sum_n f_n a_n \in \Gamma_0$ with a finite sum and $f_n \in
    C^0(\mathbb{R}^+_0)$ and $a_n \in \mathfrak{A}^\infty$.  Then we
    have
    \begin{align*}
	&\norm{(S_{\hbar} - S_{\hbar'})(a(\hbar'))} \\
	&\leq 
        \norm{S_{\hbar} (a(\hbar') - \sum f_n(\hbar')a_n)} \\
        &\quad
        +
        \norm{(S_{\hbar} - S_{\hbar'})(\sum f_n(\hbar')a_n)}
        +
        \norm{S_{\hbar'}(a(\hbar') - \sum f_n(\hbar')a_n)} \\
	&\leq
        c(\hbar)
        \norm{
          a_{\Delta\hbar}(\hbar) 
          -
          \sum\tau_{\Delta\hbar}f_n(\hbar)a_n
        }_\hbar \\
        &\quad
        +
        \norm{\sum f_n(\hbar')a_n}
        \int\left|
            G_\hbar(u) - G_{\hbar'}(u)
        \right| \D u 
        + c(\hbar') \norm{a(\hbar') - \sum f_n(\hbar')a_n}_{\hbar'}.
    \end{align*}
    The constants $c(\hbar)$ and $c(\hbar')$ are bounded in a small
    neighborhood of $\hbar\neq 0$. Since the functions $f_n$ are
    continuous, $\|\sum f_n(\hbar')a_n\|$ is bounded on a
    neighborhood. The other factors become smaller than any
    $\varepsilon>0$ for $\hbar' \longrightarrow \hbar$. This shows the
    continuity at $\hbar \ne 0$. For the continuity at $0$ we have
    with $S_0 = \id$
    \[
    \norm{S_{\hbar}(a(\hbar)) - S_{0}(a(0))}
    \leq 
    \norm{S_{\hbar}(a(\hbar) - S_{\hbar}(a(0)))}
    + 
    \norm{S_{\hbar}(a(0)) - a(0)}.
    \]
    The first term gives
    \begin{align*}
        &\norm{
          \int G_\hbar(u) \alpha_u (a(\hbar)-a(0)) \D u 
        } \\ 
        &\quad\leq 
        \norm{a(\hbar) - a(0)} \int G_\hbar(u) \D u
        =
        \norm{a(\hbar) - a(0)},
    \end{align*}
    since the Gauß function is normalized and $\alpha$ is isometric.
    Now $a(\hbar) = a_{\hbar}(0)$ approximates $a(0)$ in a
    neighborhood of zero whence this contribution becomes small for
    $\hbar \searrow 0$.  The second term becomes small thanks to the
    asymptotics from Lemma~\ref{lemma:asymp} in the topology of
    $\mathfrak{A}$. This shows the continuity at $0$, too.
\end{proof}
From this lemma we obtain the main result immediately:
\begin{theorem}
    For every classical state $\omega: \mathfrak{A} \longrightarrow
    \mathbb{C}$ and for every continuous section $a \in \Gamma$ the map
    \begin{equation}
        \label{eq:OmegahbarContinuous}
        \hbar \mapsto \omega(S_{\hbar}(a(\hbar))) = \omega_{\hbar}(a(\hbar))
    \end{equation}
    is continuous. Hence $\{\omega_{\hbar}\}_{\hbar \ge 0}$ is a
    continuous field of states with $\omega_0 = \omega$.
\end{theorem}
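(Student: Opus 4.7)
The heart of the argument is already in Lemma~\ref{lemma:ContinuityOfShbarahbar}, which asserts precisely that for $a\in\Gamma$ the map $\hbar\mapsto S_\hbar(a(\hbar))\in\mathfrak{A}$ is norm-continuous with respect to the undeformed $C^*$-norm. My plan is therefore to reduce the theorem to that lemma in essentially one line, after attending to two small items of bookkeeping.

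First, I would observe that any state $\omega$ on the $C^*$-algebra $\mathfrak{A}$ is automatically bounded, with operator norm $\omega(\Unit)=1$, hence continuous on $\mathfrak{A}$ in its $C^*$-topology. Composition of a continuous scalar functional with a continuous $\mathfrak{A}$-valued function is continuous, so Lemma~\ref{lemma:ContinuityOfShbarahbar} immediately gives continuity of $\hbar\mapsto\omega(S_\hbar(a(\hbar)))=\omega_\hbar(a(\hbar))$ for every $a\in\Gamma$. By Corollary~\ref{corollary:omegaToomegahbar} each $\omega_\hbar$ extends to a positive linear functional on $\mathfrak{A}(\hbar)$, so this establishes the defining property in Definition~\ref{definition:ContinuousFieldStates}.

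Second, I would check that $\omega_\hbar$ really is a state (i.e.\ normalized) and that $\omega_0=\omega$. Normalization follows from $S_\hbar(\Unit)=\int_V G_\hbar(u)\alpha_u(\Unit)\,\D u=\int_V G_\hbar(u)\,\D u\cdot\Unit=\Unit$, using that $\alpha$ preserves the unit and that $G_\hbar$ is normalized; hence $\omega_\hbar(\Unit)=\omega(\Unit)=1$. The identity $\omega_0=\omega$ is the content of the convention $S_0=\id$ already used in the proof of Lemma~\ref{lemma:ContinuityOfShbarahbar} and is consistent with the pointwise asymptotic statement $\lim_{\hbar\searrow 0}S_\hbar(a)=a$ from Lemma~\ref{lemma:asymp}.

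There is really no obstacle: everything nontrivial, in particular the delicate interplay between the continuity of the section $\hbar\mapsto a(\hbar)$ in the varying norms $\norm{\cdot}_\hbar$ and the continuity of the family $\{S_\hbar\}$ of operators in the fixed norm of $\mathfrak{A}$, was already handled in the preceding lemma via the uniform bound \eqref{eq:OperatorNormShbar}, the approximation by sections in $\Gamma_0$ (which is legitimate by Proposition~\ref{proposition:GammaNullGeneratesGamma}), and the explicit Gaußian estimate $\int\lvert G_\hbar(u)-G_{\hbar'}(u)\rvert\,\D u$. The theorem is thus a one-line corollary of Lemma~\ref{lemma:ContinuityOfShbarahbar} together with the boundedness of $\omega$.
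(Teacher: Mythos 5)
Your proposal is correct and follows essentially the same route as the paper, which derives the theorem immediately from Lemma~\ref{lemma:ContinuityOfShbarahbar} by composing with the norm-continuous functional $\omega$. Your additional checks (normalization via $S_\hbar(\Unit)=\Unit$ and $\omega_0=\omega$ via $S_0=\id$) are sound and merely make explicit what the paper leaves implicit.
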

\begin{remark}[Completely positive deformation]
    \label{remark:CompletePositive}
    Since with $\mathfrak{A}$ also the matrices $M_n(\mathfrak{A})$
    carry an induced action of $V$, we can repeat the whole
    deformation process for $M_n(\mathfrak{A})$. Then it is easy to
    see that the deformations of $(M_n(\mathfrak{A}))(\hbar)$ are just
    $M_n(\mathfrak{A}(\hbar))$. Thus the above statement on the
    deformation of states applies to $M_n(\mathfrak{A})$, too. In
    \cite{bursztyn.waldmann:2005a}, such deformations were called
    \emph{completely positive deformations}. Of course, here we obtain
    this statement in a strict framework and not for formal power
    series in $\hbar$.
\end{remark}

%
%

\begin{footnotesize}
    \renewcommand{\arraystretch}{0.5}
    
\end{footnotesize}


\begin{thebibliography}{10}

\bibitem {bayen.et.al:1978a}
{\sc Bayen, F., Flato, M., Fr{{\o}}nsdal, C., Lichnerowicz, A., Sternheimer,
  D.: }\newblock {\em Deformation Theory and Quantization}.
\newblock Ann. Phys.  {\bf 111} (1978), 61--151.

\bibitem {beiser.roemer.waldmann:2007a}
{\sc Beiser, S., R{\"o}mer, H., Waldmann, S.: }\newblock {\em Convergence of
  the {W}ick Star Product}.
\newblock Commun. Math. Phys.  {\bf 272} (2007), 25--52.

\bibitem {bertelson.cahen.gutt:1997a}
{\sc Bertelson, M., Cahen, M., Gutt, S.: }\newblock {\em Equivalence of Star
  Products}.
\newblock Class. Quant. Grav.  {\bf 14} (1997), A93--A107.

\bibitem {bordemann.meinrenken.schlichenmaier:1991a}
{\sc Bordemann, M., Meinrenken, E., Schlichenmaier, M.: }\newblock {\em
  Toeplitz quantization of K{\"{a}}hler manifolds and {$gl (N), N\rightarrow
  \infty$} limit}.
\newblock Commun. Math. Phys.  {\bf 165} (1994), 281--296.

\bibitem {bordemann.waldmann:1998a}
{\sc Bordemann, M., Waldmann, S.: }\newblock {\em Formal GNS Construction and
  States in Deformation Quantization}.
\newblock Commun. Math. Phys.  {\bf 195} (1998), 549--583.

\bibitem {bursztyn.waldmann:2000a}
{\sc Bursztyn, H., Waldmann, S.: }\newblock {\em On Positive Deformations of
  {$^*$}-Algebras}.
\newblock In: {\sc Dito, G., Sternheimer, D. (eds.): }\newblock {\em
  Conf{\'e}rence Mosh{\'e} Flato 1999. Quantization, Deformations, and
  Symmetries}, {\em Mathematical Physics Studies} no. {\bf 22},   69--80.
  Kluwer Academic Publishers, Dordrecht, Boston, London, 2000.

\bibitem {bursztyn.waldmann:2005b}
{\sc Bursztyn, H., Waldmann, S.: }\newblock {\em Completely positive inner
  products and strong {M}orita equivalence}.
\newblock Pacific J. Math.  {\bf 222} (2005), 201--236.

\bibitem {bursztyn.waldmann:2005a}
{\sc Bursztyn, H., Waldmann, S.: }\newblock {\em Hermitian star products are
  completely positive deformations}.
\newblock Lett. Math. Phys.  {\bf 72} (2005), 143--152.

\bibitem {cahen.gutt.rawnsley:1990a}
{\sc Cahen, M., Gutt, S., Rawnsley, J.: }\newblock {\em Quantization of
  K{\"{a}}hler Manifolds I: Geometric Interpretation of Berezin's
  Quantization}.
\newblock J. Geom. Phys.  {\bf 7} (1990), 45--62.

\bibitem {cahen.gutt.rawnsley:1993a}
{\sc Cahen, M., Gutt, S., Rawnsley, J.: }\newblock {\em Quantization of
  K{\"{a}}hler Manifolds. II}.
\newblock Trans. Am. Math. Soc.  {\bf 337}.1 (1993), 73--98.

\bibitem {cahen.gutt.rawnsley:1994a}
{\sc Cahen, M., Gutt, S., Rawnsley, J.: }\newblock {\em Quantization of
  K{\"{a}}hler Manifolds. III}.
\newblock Lett. Math. Phys.  {\bf 30} (1994), 291--305.

\bibitem {cahen.gutt.rawnsley:1995a}
{\sc Cahen, M., Gutt, S., Rawnsley, J.: }\newblock {\em Quantization of
  K{\"{a}}hler Manifolds. IV}.
\newblock Lett. Math. Phys.  {\bf 34} (1995), 159--168.

\bibitem {deligne:1995a}
{\sc Deligne, P.: }\newblock {\em D{\'{e}}formations de l'Alg{\`{e}}bre des
  Fonctions d'une Vari{\'{e}}t{\'{e}} Symplectique: Comparaison entre Fedosov
  et DeWilde, Lecomte}.
\newblock Sel. Math. New Series  {\bf 1}.4 (1995), 667--697.

\bibitem {dewilde.lecomte:1983b}
{\sc DeWilde, M., Lecomte, P. B.~A.: }\newblock {\em Existence of Star-Products
  and of Formal Deformations of the Poisson Lie Algebra of Arbitrary Symplectic
  Manifolds}.
\newblock Lett. Math. Phys.  {\bf 7} (1983), 487--496.

\bibitem {dixmier:1977b}
{\sc Dixmier, J.: }\newblock {\em {$C^*$}-Algebras}.
\newblock North-Holland Publishing Co., Amsterdam, 1977.
\newblock Translated from the French by Francis Jellett, North-Holland
  Mathematical Library, Vol. 15.

\bibitem {fedosov:1996a}
{\sc Fedosov, B.~V.: }\newblock {\em Deformation Quantization and Index
  Theory}.
\newblock Akademie Verlag, Berlin, 1996.

\bibitem {kaschek:2008a}
{\sc Kaschek, D.: }\newblock {\em Nichtperturbative
  Deformationstheorie physikalischer Zust\"ande}.
\newblock Master thesis, Fakult\"at f\"ur Mathematik und Physik,
  Physikalisches Institut, Albert-Ludwigs-Universit\"at Freiburg, 2008.

\bibitem {kontsevich:2003a}
{\sc Kontsevich, M.: }\newblock {\em Deformation Quantization of {P}oisson
  manifolds}.
\newblock Lett. Math. Phys.  {\bf 66} (2003), 157--216.

\bibitem {lance:1995a}
{\sc Lance, E.~C.: }\newblock {\em {H}ilbert {$C^*$}-modules. A Toolkit for
  Operator algebraists}, vol. 210 in {\em London Mathematical Society Lecture
  Note Series}.
\newblock Cambridge University Press, Cambridge, 1995.

\bibitem {landsman:1993b}
{\sc Landsman, N.~P.: }\newblock {\em Deformations of algebras of observables
  and the classical limit of quantum mechanics}.
\newblock Rev. Math. Phys.  {\bf 5} (1993), 775--806.

\bibitem {landsman:1998a}
{\sc Landsman, N.~P.: }\newblock {\em Mathematical Topics between Classical and
  Quantum Mechanics}.
\newblock {\em Springer Monographs in Mathematics}.
\newblock Springer-Verlag, Berlin, Heidelberg, New York, 1998.

\bibitem {landsman:2008a}
{\sc Landsman: N.~P.: }\newblock {\em Macroscopic observables and the
  Born rule}.
\newblock Preprint {\bf arXiv:0804.4849} (2008), 13 pages.

\bibitem {natsume.nest.peter:2003a}
{\sc Natsume, T., Nest, R., Peter, I.: }\newblock {\em Strict Quantizations of
  Symplectic Manifolds}.
\newblock Lett. Math. Phys.  {\bf 66} (2003), 73--89.

\bibitem {nest.tsygan:1995a}
{\sc Nest, R., Tsygan, B.: }\newblock {\em Algebraic Index Theorem}.
\newblock Commun. Math. Phys.  {\bf 172} (1995), 223--262.

\bibitem {rieffel:1993a}
{\sc Rieffel, M.~A.: }\newblock {\em Deformation quantization for actions of
  $\mathbb{R}^d$}.
\newblock Mem. Amer. Math. Soc.  {\bf 106}.506 (1993), 93 pages.

\bibitem {schmuedgen:1990a}
{\sc Schm{\"{u}}dgen, K.: }\newblock {\em Unbounded Operator Algebras and
  Representation Theory}, vol.~37 in {\em Operator Theory: Advances and
  Applications}.
\newblock Birkh{\"{a}}user Verlag, Basel, Boston, Berlin, 1990.

\bibitem {taylor:1986a}
{\sc Taylor, M.~E.: }\newblock {\em Noncommutative harmonic analysis}, vol.~22
  in {\em Mathematical Surveys and Monographs}.
\newblock American Mathematical Society, Providence, RI, 1986.

\bibitem {waldmann:2005b}
{\sc Waldmann, S.: }\newblock {\em States and Representation Theory in
  Deformation Quantization}.
\newblock Rev. Math. Phys.  {\bf 17} (2005), 15--75.

\bibitem {waldmann:2007a}
{\sc Waldmann, S.: }\newblock {\em Poisson-{G}eometrie und
  {D}eformationsquantisierung. {E}ine {E}inf{\"u}hrung}.
\newblock Springer-Verlag, Heidelberg, Berlin, New York, 2007.

\end{thebibliography}
\end{document}